\title{Capacity and Spectral Efficiency of Interference Avoiding Cognitive
Radio
with Imperfect Detection}
\author{Aaqib Patel*,~\IEEEmembership{Student Member,~IEEE,} Md. Zafar Ali
Khan,~\IEEEmembership{Member,~IEEE,}
        S. N. Merchant*, U. B. Desai,~\IEEEmembership{Senior Member,~IEEE,}%
\thanks{This work was supported through grants to
Mohammed Zafar Ali Khan and S. N. Merchant by the
DIT grant on High performance cognitive radio.}
\thanks{Md. Zafar Ali Khan and  U. B. Desai are with the  Dept. of Electrical Engineering, Indian Institute of Technology, Hyderabad
AP 502205, India. Email: zafar@iith.ac.in, ubdesai@iith.ac.in. Aaqib Patel* and S. N. Merchant* are with the  Dept. of Electrical Engineering, Indian Institute of Technology, Bombay; Email: aaqib,merchant@ee.iitb.ac.in,
}}
\begin{document}
\maketitle
\doublespace
\newtheorem{myprop}{Propositon}
\newtheorem{mydef}{Definition}
\newtheorem{mythe}{Theorem}
\newtheorem{mylem}{Lemma}
\newtheorem{myex}{Example}
\newtheorem{mycor}{Corollary}
\renewcommand*\thesection{\Roman{section}}
\begin{abstract}
In this paper, we consider a model in which the unlicensed or the Secondary User
(SU) equipped with a Cognitive Radio (CR) (together referred to as CR)
interweaves its transmission with that of the licensed or the Primary User
(PU). In this model, when the CR detects the PU to be (i) busy it does not
transmit and; (ii) PU to be idle it transmits. Two situations based on CR's
detection of PU are considered, where the CR detects PU (i) perfectly - referred
to as the ``ideal case'' and; (ii) imperfectly - referred to as ``non ideal
case''. For both the cases we bring out the rate region, sum capacity of PU and
CR and spectral efficiency factor - the ratio of sum capacity of PU and CR to
the capacity of PU without CR. We consider the Rayleigh fading channel to
provide insight to our results. For the ideal case we study the effect of PU
occupancy on spectral efficiency factor. For the non ideal case, in addition to
the effect of occupancy, we study the effect of false alarm and missed detection
on the rate region and spectral efficiency factor. We characterize the set
of values of false alarm and missed detection probabilities for which the system
benefits, in the form of admissible regions. We show that false alarm has a more
profound effect on the spectral efficiency factor than missed detection. We also
show that when PU occupancy is small, the effects of both false alarm and missed
detection decrease. Finally, for the standard detection techniques viz. energy
detection, matched filter and magnitude squared coherence, we show that that the
matched filter performs best followed by magnitude squared coherence followed by
energy detection with respect to spectral efficiency factor.
\end{abstract}
\section{Introduction}
While certain parts of the frequency spectrum are crowded by users, most part of
the spectrum still remains heavily unoccupied \cite{fcc}. Due to this
\textit{spectrum imbalance}, the legacy command and control policy
of the FCC and the recent proliferation in usage of wireless devices, new
techniques that can perform Dynamic Spectrum Access (DSA) are needed. The so
called Cognitive Radio (CR), in essence, performs DSA by sensing the radio
environment, choosing the best channel (frequency, time slot or codes)
that is available and adapting parameters to communicate simultaneously
ensuring the interference to legacy users below a prescribed level \cite{mito}.

In literature (for example \cite{book}, \cite{goldd}) there are three paradigms
for CR based systems mentioned, based upon the nature of operation of the CR,
viz. (i) interference avoiding or interweave, (ii) interference limited or
underlay and (iii) interference mitigating or overlay. Each of the above
paradigm assumes different amounts or levels of side information that the SU has
about PU transmission.

From an information theoretic viewpoint, most of the recent works focus on
modelling the CR channel as an interference channel with some side information
about the PU codebooks, messages, channel characteristics etc, being available
to the CR transmitter \cite{goldd}-[15]. Also they mostly address
underlay and overlay based CRNs.

In \cite{jaf}  a two switch model for CR is presented with one switch at the SU
transmitter and the other at the SU receiver. The role of each switch is to
identify transmitting opportunity (channel) which can be different at the
SUs transmitter and receiver. Transmission is then done on a channel which
is identified to be free on both the transmitter and the receiver ends. In
\cite{nat} the CR channel is modeled as two sender two receiver model with the
transmitters having non causal information. In \cite{vish1},\cite{vish2}
capacity for low and high interference regimes is analyzed, wherein the PU
codeword is assumed to be known to the CR. Extending from two user case to the three user case, in \cite{nag} achievable rate regions and outer bounds are derived. Channels are modeled as interference channels where the transmitters cooperate in a unidirectional manner via 3 different noncausal message-sharing mechanism specified therein. 

For the case of underlay based systems, in \cite{mus},\cite{na} the capacity gains offered when only partial channel information of the link between the CR transmitter and PU receiver is available to the CR under average received power constraint is studied. In \cite{son} an optimal power allocation policy to maximize capacity, which exploits a two-dimensional frequency-selectivity on CRs own channel as well as CR to PU channel under an interference-power constraint as well as a conventional transmit power constraint is proposed. Moving further ahead \cite{ban} proposes a joint overlay and underlay technique to maximize the total capacity of CR.

While most of the above works either attempt to characterize the achievable regions or maximize CR capacity, certain works study about the sum capacity of the CR and PU together so that capacity gains of entire system is brought out \cite{lia},\cite{zha}.

Significantly, all the above works assume that the CR and PU have perfect side
information about the PU occupancy. Although such models are of use, practical scenarios wherein there
is imperfect side information available needs to be studied as capacity can
decrease with imperfect side information, significantly.

In \cite{book}, a simple rate region involving rates of PU and CR has been
brought out for the case where the CR interweaves its communication with the
PU by means of time sharing. It is assumed therein that CR has perfect
knowledge of PU's channel occupancy. We refer to this as the ``ideal case''.

In this paper, we find the \textit{spectral efficiency factor}, which is the
ratio of the sum capacity of CR and PU to the capacity of the PU in absence of
CR, for the ideal case. With the help of an example of Rayleigh flat fading
channel, we study its variation with PU occupancy for different relative power
levels of the CR and PU. Along with the ideal case, we consider the case where
CR interweaves its communication with PU with ``only an imperfect'' knowledge of
the PUs channel occupancy. We refer to this as the ``non ideal case''. We obtain
rate regions for the non ideal case and compare it with the ideal case. We show
that the rate region of the non ideal case is a sub-region of the ideal case. We
also discuss the effects of false alarm and missed detection on the rate region
for the non ideal case. We then find the maximum sum capacity and use this
expression compute the spectral efficiency factor. We bring out
effects of false alarm and missed detection on the spectral efficiency factor.
We show that, probability of false alarm has more profound effect than the
missed detection on spectral efficiency factor.

In the ideal case the, spectral efficiency factor is always greater than unity.
However, in the non ideal case, if the error in detection is large, i.e. the
probability of false alarm, $p_{fa}$, and the probability of missed detection,
$p_{md}$, either individually or collectively transgress certain limits; the
spectral efficiency factor turns out to be less than unity. We characterize
regions made up of those $(p_{fa}, p_{md})$ points, called the admissible
regions, for which we obtain spectral efficiency factor greater than unity.
Finally we compare known detection schemes from a spectral efficiency
perspective.

The rest of the paper is as follows. In Section \ref{sec2} we discuss the system
model. In Section \ref{sec3} we show the rate region for the case where the
CR detects the PU perfectly. We then address the non ideal detection and
characterize the corresponding  rate region in Section \ref{sec4}. We also find
conditions for obtaining maximum sum capacity therein. Then we discuss
the spectral efficiency factor in Section \ref{sec5}, wherein we discuss the
effects of false alarm and missed detection on the spectral efficiency factor.
We then show a comparison of the certain standard detection techniques in light
of the results obtained in section \ref{sec6}. Finally conclusions are given in
section \ref{sec7}.
\section{System Model}\label{sec2}
In this paper we assume that the Primary User (PU) transmits with a probability
$1-p$ and the channel is free with a probability $p$. However, from a detection
perspective, the occupancy of a single channel is defined as the unconditional
probability that the measured signal strength does exceeds some predetermined
threshold \cite{Gib}. In spectrum sensing, the detection involves comparing a
test statistic against a threshold, suitably chosen. Let $Y$ be the observed
random variable and $X$ the transmit random variable, so that $X=1$ if there is
no transmission and 0 otherwise. It follows that $p = \mathbb{P}[X= 1]$. We now
discuss the system models for the two cases  based on the
relation of $Y$ and $X$.
\subsection{System Model for Ideal Detection}\label{ssec21}
In case of perfect detection, $Y = X$. This is referred to as the case
of ``ideal'' detection. The system model is described in Fig. \ref{idcha}.

The baseband relations for the PU and the CR are defined as
\begin{align} \label{ideq}
\mbox{For, }Y=X=0; \quad Y_p = h_pX_p + Z_p, \nonumber \\
\mbox{For, }Y=X=1; \quad Y_c = h_cX_c + Z_c,
\end{align}
where $X_p$ and $X_c$ are the signals transmitted, $Y_p$ and
$Y_c$ are the signals received by the PU and the CR respectively. $Z_p$ and
$Z_c$ are uncorrelated circularly symmetric zero mean variance $\sigma^2$ white
Gaussian random variables i.i.d. over time for the PU and CR respectively,
denoted as, $Z_p, Z_c \sim \mathcal{CN}(0,\sigma^2)$. The average power
constraints on the PU and CR are $\mathbb{E}[X_p]\leq P_p$ and
$\mathbb{E}[X_c]\leq P_c$. The fading coefficients $h_p$ and $h_c$ are
uncorrelated and i.i.d over time. These channels are asymmetrically time shared,
i.e. the CR channel comes into existence only when the PU channel is not used.
\subsection{System Model for Non Ideal Detection} \label{ssec22}
In general, $Y\neq X$, which is referred to as the case of ``non ideal''
detection. In non ideal detection there are errors in detection. The probability
of false alarm, $p_{fa} = \mathbb{P}[Y=1|X=0]$ is the probability that CR
detects the channel to be free given it was occupied by the PU. Similarly, the
probability of missed detection, $p_{md} = \mathbb{P}[Y=0|X=1]$, is the
probability that CR detects the channel to be occupied given it was free from
PU transmission \cite{kay}. Note that the other two conditional probabilities
can be expressed in terms of $p_{fa}$ and $p_{md}$. The system model for non
ideal case is as shown in Fig. \ref{nidcha}.

Due to the possibility of false alarm, the PU and CR might transmit at the same
time, which leads to a possibility of the interference channel. Hence, the
channel can be a pure flat fading channel with probability $\mathbb{P}[X=1,Y=1]$
and an interference channel with probability $\mathbb{P}[X=0,Y=1]$. The entire
channel is time shared between a pure flat fading channel for PU, a pure flat
fading channel for the CR and finally an interference channel for the PU and CR.
We thus have a time sharing random variable as follows.
\begin{mydef}\label{time}
Time sharing random variable $Q$ is as defined
\begin{enumerate}
 \item $Q=1$ is the event that $X=1,Y=1$, CR correctly detects the channel to
be free;
 \item $Q=2$ is the event that $X=0,Y=1$, CR erroneously detects the channel to
be free;
 \item $Q=3$ is the event that $X=1,Y=0$, CR erroneously detects the channel to
be occupied;
 \item $Q=4$ is the event that $X=0,Y=0$, CR correctly detects the channel to
be occupied.
\end{enumerate}
\end{mydef}
The probability distribution on $Q$ is as follows
\begin{align}
\mathbb{P}(Q = 1)\ & = \mathbb{P}[X=1,Y=1] = p(1-p_{md}), \nonumber \\
\mathbb{P}(Q = 2)\ & = \mathbb{P}[X=0,Y=1] = (1-p)p_{fa}, \nonumber \\
\mathbb{P}(Q = 3)\ & = \mathbb{P}[X=1,Y=0] = pp_{md}, \nonumber \\
\mathbb{P}(Q = 4)\ & = \mathbb{P}[X=0,Y=0] = (1-p)(1-p_{fa}).
\end{align}
Note that $Q=1,2,4$ respectively denote the instances where there is a flat
fading channel for the CR, an interference channel between the PU and the CR,
and a flat fading channel for the PU. Also note that $Q=3$ is the instance where
the channel goes waste as there is no transmission during this time, neither
from PU nor from CR. The distribution of the time sharing random variable is
fixed for a given $(p_{fa},p_{md})$. We have the baseband representation as
\begin{align} \label{nideq}
 \mbox{For }Q=1;\ & \quad Y_c = h_cX_c + Z_c, \nonumber \\
 \mbox{For }Q=2; \ & \quad Y_c = h_cX_c + h_{pc}X_p + Z_c \mbox{ and } Y_p =
h_pX_p + h_{cp}X_c + Z_p,  \nonumber \\
 \mbox{For }Q=4; \ & \quad Y_p = h_pX_p + Z_p,
\end{align}
where $Y_p$, $Y_c$, $X_p$, $X_c$, $Z_p$, $Z_c$, $h_p$ and $h_c$ remain the same
as in the ideal case. $h_{pc}$ and $h_{cp}$ are also i.i.d over time and
uncorrelated with each other as well as with $h_p$ and $h_c$.

The system model for the non ideal case is a very general model and encompasses some of the models available. In Fig. \ref{gen1}, we have shown how the two switch model proposed in \cite{jaf} can be expressed as a special case of the model we have proposed. There are three possibilities, viz (i) the CR transmitter and the receiver detect the same channels to be free. Here the CR transmitts using perfect knowledge and this is the case of ideal detection i.e. $p_{fa}=p_{md}=0$, (ii) the CR tranmsitter finds certain channels to be free which are not found free at the CR receiver hence if the CR transmits then it interfers with the PU which implies $p_{fa}\neq0$, and (iii) the CR receiver determines free channels but the CR transmitter does not which leads to a loss in transmission opportunity an hence, $p_{md}\neq0$. 
\section{Rate Region and Spectral Efficiency Factor for Ideal
Detection}\label{sec3}
In this section we present the rate region \cite{book} and spectral efficiency
factor for ideal detection. These results help in presenting the non ideal case
in proper perspective. We then define spectral efficiency factor as the ratio of
the sum capacity of the CR and PU to the capacity of PU in absence of CR. We
discuss its variation as a function of $p$ for different values of relative
transmit power levels of CR and PU with help of an example.

The system model for the ideal case is described in subsection \ref{ssec21}. The
average capacity for the PU, $C_p$, and CR, $C_c$, is given by \cite{tse}
\begin{align} \label{sim}
 C_p = \mathbb{E}_{|h_p|}
\left[\log\left(1+\dfrac{|h_p|^2P_p}{\sigma^2}\right)\right]
\mbox{bits/complex dimension} \nonumber \\
 C_c = \mathbb{E}_{|h_c|}
\left[\log\left(1+\dfrac{|h_c|^2P_c}{\sigma^2}\right)\right]
\mbox{bits/complex dimension}.
\end{align}
The corresponding rate region is given by
\begin{align} \label{idreq}
\ & \mathcal{C} = \{(R_c,R_p)| 0 \leq R_c \leq pC_c, 0\leq R_p
\leq (1-p)C_p,  \mbox{ }0\leq p \leq 1 \}
\end{align}
where the rates $R_c$ of the PU transceiver pair and $R_p$ of the CR
transceiver pair is achieved through ideal white space filling
\cite{book}. Fig. \ref{idrate} shows the rate region.

The sum capacity of PU and CR is $pC_c + (1-p)C_p$. There are three possible
cases viz. (i) $C_p>C_c$, in this case $p=0$ maximizes the sum capacity,
(ii) $C_p<C_c$, in this case $p=1$ maximizes the sum capacity, and (iii)
$C_p=C_c$, in this case any value of $p$ would maximize the sum capacity.

We now define spectral efficiency factor in terms of the sum capacity as
\begin{mydef} \label{speeff}
Spectral efficiency factor, $\eta$, is defined as the ratio of sum capacity of
CR and PU to the capacity of PU in the absence of CR. Formally,
$$
\eta = \dfrac{D_p^{CR} + D_s^{CR}}{D_p}
$$
where,
\begin{itemize}
\item $D_p^{CR}$ = PUs data rate achievable with CR present,
\item $D_s^{CR}$ = CRs achievable data rate
\item $D_p$ = PUs data rate achievable without CR present.
\end{itemize}
\end{mydef}
With this definition, we have
\begin{mythe}
The Spectral efficiency factor for ideal detection with average power
constraints as $P_c$ and $P_p$ for the CR and PU respectively is
given by
$$\eta = 1+\dfrac{pC_c}{(1-p)C_p}$$
where $C_c$ and $C_p$ are as given in  (\ref{sim}).
\end{mythe}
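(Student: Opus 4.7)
The plan is to unpack the three quantities in Definition~\ref{speeff} directly from the ideal-detection system model and then form the stated ratio. The proof is essentially a bookkeeping argument, since all the heavy lifting (single-user fading capacity under the given power constraints) has already been encapsulated in the expressions for $C_p$ and $C_c$ in~(\ref{sim}), and the achievability via white-space filling has been quoted from~\cite{book} to justify the rate region~(\ref{idreq}).

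First, I would identify $D_s^{CR}$. Under ideal detection the CR sees $Y=X$, so it transmits exactly on the fraction $p = \mathbb{P}[X=1]$ of the time that the PU is idle, and during those intervals its link is the pure flat-fading channel in~(\ref{ideq}) achieving rate $C_c$. Hence $D_s^{CR} = p\,C_c$, which is also the upper boundary of the $R_c$ axis of the rate region~(\ref{idreq}). Next I would identify $D_p^{CR}$: since the PU occupies the channel with probability $1-p$ and, in the ideal case, the CR never interferes during those intervals, the PU effectively sees its own flat-fading channel for a fraction $1-p$ of the time, giving $D_p^{CR} = (1-p)\,C_p$, matching the $R_p$ boundary of~(\ref{idreq}).

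The key conceptual step is recognizing that $D_p$, the PU's achievable rate \emph{without} the CR present, equals $(1-p)\,C_p$ as well. This is because $1-p$ is the prior activity probability of the PU itself; in the absence of CR, the channel is simply unused during the fraction $p$ of time the PU chooses not to transmit. Thus, removing the CR does not alter the PU's own duty cycle, and so $D_p = D_p^{CR}$.

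Plugging these three expressions into Definition~\ref{speeff} gives
\begin{equation*}
\eta \;=\; \frac{(1-p)C_p + p C_c}{(1-p)C_p} \;=\; 1 + \frac{p\,C_c}{(1-p)\,C_p},
\end{equation*}
which is the claim. The only step that could be contested is the identification $D_p = (1-p)C_p$ rather than $D_p = C_p$; I would justify this by emphasizing that $p$ is the PU's idle probability (as defined in Section~\ref{sec2}) and is a property of the PU's traffic, not of the CR, so the benchmark must use the same duty cycle. No further calculation is required.
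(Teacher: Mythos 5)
Your proof is correct and follows the same route as the paper, whose own proof is just the one-line remark that the result follows from Definition~\ref{speeff} and~(\ref{sim}); you have simply filled in the bookkeeping, including the one genuinely subtle identification $D_p=(1-p)C_p$ (which the paper leaves implicit but which is confirmed by the denominator $(1-p)C_p$ appearing in the non-ideal counterpart, Theorem~\ref{thn}).
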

\begin{proof}
This follows from definition \ref{speeff}. and (\ref{sim}).
\end{proof}
The value of this factor is always greater than unity as the CR transmits when
the primary is idle, making more use of resource. It is of interest to see the
variation of $\eta$ with $p$.
\begin{mycor}\label{cor1}
 $\eta$ increases with $p$ with a rate of $\dfrac{1}{(1-p)^2}.\dfrac{C_c}{C_p}$.
\end{mycor}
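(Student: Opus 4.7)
The plan is to verify the corollary by direct differentiation of the closed-form expression for $\eta$ given in the preceding theorem. Since $C_c$ and $C_p$ are expectations over the fading magnitudes $|h_c|$ and $|h_p|$ and do not depend on $p$, the ratio $C_c/C_p$ is a constant with respect to $p$, so the problem reduces to differentiating $p/(1-p)$ on the interval $0\le p<1$.

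First I would rewrite $\eta = 1 + \tfrac{C_c}{C_p}\cdot\tfrac{p}{1-p}$ to isolate the only $p$-dependent factor. Then I would apply the quotient rule to $p/(1-p)$, obtaining $\tfrac{(1-p)\cdot 1 - p\cdot(-1)}{(1-p)^2} = \tfrac{1}{(1-p)^2}$. Multiplying by the constant $C_c/C_p$ yields the claimed derivative, which is strictly positive whenever $C_c>0$, so monotonicity of $\eta$ in $p$ follows immediately.

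There is essentially no obstacle here; the statement is a one-line calculus exercise once the theorem is in hand. The only thing worth remarking is that the expression is valid strictly for $p<1$ (the derivative blows up as $p\to 1^-$, reflecting the fact that the CR then gets full access to the channel), and that the argument implicitly uses independence of $C_c$ and $C_p$ from $p$, which holds because the fading statistics and power constraints are fixed exogenously in the model of Section~\ref{ssec21}.
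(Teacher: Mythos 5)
Your proposal is correct and follows essentially the same route as the paper: both note that $C_c$ and $C_p$ are independent of $p$ and obtain the slope by differentiating $\eta = 1 + \frac{C_c}{C_p}\cdot\frac{p}{1-p}$ with respect to $p$. Your version simply spells out the quotient-rule computation and the $p<1$ caveat, which the paper's one-line proof leaves implicit.
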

\begin{proof}
As $p$ increases, then $(1-p)$ decreases which implies that
$\dfrac{1}{1-p}$ increases. Hence $\dfrac{p}{1-p}$ increases with $p$. Since
$C_p$ and $C_c$ are independent of $p$, $\eta$ increases with $p$. By
differentiating $\eta$ w.r.t. $p$, we get the slope of the curve of $\eta$.
\end{proof}

We look at an example where fading is Rayleigh to better understand the
consequence of Corollary \ref{cor1}.
\begin{myex}
Consider, the fading coefficients $h_p$ and $h_c$ to be uncorrelated and i.i.d
over time, circularly symmetric zero mean complex Gaussian random variables with unit variance, denoted as
$h_p, h_c \sim \mathcal{CN}(0,1)$. Thus $|h_p|, |h_c|$ are Rayleigh distributed
with parameter value 1, denoted as $|h_p|, |h_c| \sim \mbox{Rayleigh}(1)$. We assume
a fixed $P_p = 10000$ mW and we take 4 different values of $P_c$ such that $RS =
10\log\dfrac{P_p}{P_c} = \{0,10,20,30\}$ dB. We make plot of the $\eta$ vs $p$
for these 4 different cases. Note that throughout the paper $RS$ has the same
definition unless otherwise specified.

We observe from Fig. \ref{exm}, that for higher values of $p$, $\eta$
increases rapidly irrespective of $RS$. This is clear from Corollary \ref{cor1}
since $\dfrac{1}{(1-p)^2}$ increases rapidly, faster than increase in $p$ and
$RS$ does not effect $p$. Hence, even if $p$ increases slightly $\eta$ increases
rapidly. Looking at Fig. \ref{exm}, for $RS = 10$ dB, when we move from
$p=0.8$ to $p=0.98$, the spectral efficiency factor increases from 5 to
around 50 which is a ten times increase. Comparing this to $RS = 0$ dB, the
growth is more rapid from $p=0.8$ to $p=0.98$ there is 20 times increase. Hence,
for higher $RS$ the growth is more rapid. To study the effect of $\eta$ for
small value of $p$, i.e. $p\in[0, 0.5]$ we blow up a portion of Fig. 4 in
Fig. 5. Observe when $RS$ is less, irrespective of $p$, the growth rate of
$\eta$ is much higher than the case when $RS$ is higher. This is evident from
Corollary \ref{cor1} since lower $RS$ would imply higher $\dfrac{C_c}{C_p}$.
\end{myex}
Note that while $\eta$ for the ideal case is greater than unity, in the non
ideal case it may not always be greater than unity as we see in the
Section~\ref{sec5}.

Note that all the Figs. \ref{pfrate} - \ref{ratepmpf} and
\ref{pflos}-\ref{snrdet26} are plotted for Rayleigh fading channels with unit
second moment. That is all $h_c$, $h_p$, $h_{cp}$ and $h_{pc}$  are complex
Gaussian random variables with zero mean and unit variance. However, since we
consider average or ergodic capacity, the analysis and conclusions are valid for
any other distribution for which expectation is finite. Also note that the average received SNR at the PU receiver is assumed to be $0dB$ unless otherwise specifically mentioned.
\section{Rate Region and maximum sum capacity for Non Ideal Detection}
\label{sec4}
In this section, we find out the rate region for the non ideal case for which
the system model is discussed in section \ref{ssec22}. By
considering all the cases defined in the time sharing random variable $Q$ in
definition \ref{time}, we bring out the capacity expressions and then specify
the rate region. We look at the effects of $p_{fa}$, $p_{md}$ and
$(p_{fa},p_{md})$ together on the rate region with the help of plots. We see
that the rate region approaches that of the ideal case as $(p_{fa},p_{md})$ tend
to zero. We then write the sum capacity and maximize it under two specific type
of constraints viz, when $p$ is specified and the maximization is over
$(p_{fa},p_{md})$. The other  case is when $(p_{fa},p_{md})$ are specified
and maximization is over $p$.

We have,
\begin{mylem}\label{rate1}
The average capacity $C_p'$ and $C_c'$ of the PU and CR for average transmit
power constraints $P_p$ and $P_c$ are respectively as follows
\begin{align}
C_p' \ & = (1-p)((1-p_{fa})A_p+p_{fa}B_p)), \\
C_c' \ & = p(1-p_{md})A_c + (1-p)p_{fa}B_c,
\end{align}
where,
\begin{align}
 A_p \ & = \mathbb{E}_{|h_p|}\left[\log\left(1+\dfrac{|h_p|^2P_p}{
\sigma^2 } \right)\right], \nonumber \\
 B_p \ & = \mathbb{E}_{|h_{pp}|}\left[\log\left(1+\dfrac{|h_{pp}|^2P_p}{
P_c + \sigma^2 } \right)\right], \nonumber \\
 A_c \ & = \mathbb{E}_{|h_c|}\left[\log\left(1+\dfrac{|h_c|^2P_c}{
\sigma^2 } \right)\right], \nonumber \\
 B_c \ & = \mathbb{E}_{|h_{cc}|}\left[\log\left(1+\dfrac{|h_{cc}|^2P_p}{
P_p + \sigma^2 } \right)\right]. \nonumber
\end{align}
The units are again bits/complex dimensions.
\end{mylem}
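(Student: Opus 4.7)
The plan is to condition on the time-sharing random variable $Q$ defined in Definition \ref{time} and apply the ergodic capacity formula for a flat-fading channel in each of the four states. Because $Q=1,2,3,4$ partition the transmission time with known probabilities $p(1-p_{md})$, $(1-p)p_{fa}$, $pp_{md}$ and $(1-p)(1-p_{fa})$ that do not depend on the transmitted symbols, the long-term achievable rate for each user is just the probability-weighted sum of the per-state rates, exactly as in the standard time-sharing argument behind the ideal rate region (\ref{idreq}).

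First I would enumerate what each user experiences in each state. The PU transmits only when $X=0$, i.e. during $Q=2$ and $Q=4$. In $Q=4$ only the PU is on the air, so from (\ref{nideq}) the PU link is a pure flat-fading channel and its ergodic capacity is precisely $A_p$. In $Q=2$ the CR is simultaneously transmitting; the PU receiver therefore sees its own fading signal plus an independent Gaussian-like interferer of average power $P_c$, so treating interference as noise gives ergodic rate $B_p$. Summing, $C_p' = \Pr[Q=4]A_p + \Pr[Q=2]B_p = (1-p)(1-p_{fa})A_p + (1-p)p_{fa}B_p$, which factors to the claimed form. The CR derivation is fully symmetric: the CR is active when $Y=1$, i.e. during $Q=1$ (pure flat-fading channel, rate $A_c$) and $Q=2$ (interference channel, rate $B_c$), yielding $C_c' = p(1-p_{md})A_c + (1-p)p_{fa}B_c$. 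A quick sanity check is that $Q=3$ contributes zero because neither transmitter is active, which is precisely why $pp_{md}$ does not appear in either expression; letting $p_{fa},p_{md}\to 0$ collapses $C_p'$ to $(1-p)A_p=(1-p)C_p$ and $C_c'$ to $pA_c=pC_c$, matching the ideal rate region (\ref{idreq}).

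The main obstacle is the handling of state $Q=2$, since a two-user Gaussian interference channel has unknown capacity in general. My treatment implicitly adopts a single-user decoding rule at each receiver with the other user's signal modelled as additive circularly-symmetric Gaussian noise, which is achievable under the Gaussian-codebook convention already used in (\ref{ideq})--(\ref{nideq}). This is the natural assumption in the interweave paradigm, where neither receiver has access to the other user's codebook or timing, and it is the only operating point consistent with the form of $B_p$ and $B_c$ stated in the lemma. I would make this modelling choice explicit at the outset of the proof so that the rates $B_p$ and $B_c$ are unambiguously identified as ergodic treat-interference-as-noise rates rather than genuine interference-channel capacities; after that observation, the remaining steps are just arithmetic on the probabilities of $Q$.
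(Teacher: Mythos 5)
Your proposal is correct and follows essentially the same route as the paper: condition on the time-sharing variable $Q$, assign each user its ergodic flat-fading rate in states $Q=1$ and $Q=4$, treat the other user's signal as additive Gaussian noise in state $Q=2$, note that $Q=3$ contributes nothing, and form the probability-weighted sum $\sum_i C|_{(Q=i)}\mathbb{P}(Q=i)$. If anything, you are more careful than the paper in making the treat-interference-as-noise assumption explicit for $Q=2$, and your final expressions correctly attach the PU terms to $C_p'$ and the CR terms to $C_c'$ (the paper's own proof swaps these labels in its concluding display, evidently a typographical slip).
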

\begin{proof}
We consider the 4 different situations described by $Q$.
\begin{itemize}
 \item For $Q=1$ we have the baseband equations as given in (\ref{nideq}) and
reproduced here, $$Y_c = h_cX_c + Z_c.$$ Hence, the capacity
of CR given $Q=1$ is $$C_c|_{(Q=1)} =
\mathbb{E}_{|h_{c}|}\left[\log\left(1+\dfrac{|h_{c}|^2P_c}{\sigma^2 }
\right)\right].$$ Note that $C_p|_{(Q=1)}=0$.
 \item For $Q=2$ we have the baseband equations as given in (\ref{nideq}) and
reproduced here,
$$Y_c =  h_{c}X_c + h_{pc}X_p + Z_c, \quad Y_p = h_{p}X_p +
h_{cp}X_c + Z_p.$$ Here, both the PU and CR treat each others transmissions as
Gaussian noise with mean 0 and variance $\mathbb{E}[|h_{pc}|^2P_p] = P_p$ and
$\mathbb{E}[|h_{cp}|^2P_c] = P_c$ respectively. Hence, the total noise variance
for the PU and CR is $P_p +\sigma^2$ and $P_c + \sigma^2$ respectively. Hence,
capacity of the PU and CR is $$
C_p|_{(Q=2)} = \mathbb{E}_{|h_{p}|}\left[\log\left(1+\dfrac{|h_{p}|^2P_p}{
P_c + \sigma^2 } \right)\right], \mbox{ }
C_c|_{(Q=2)} = \mathbb{E}_{|h_{c}|}\left[\log\left(1+\dfrac{|h_{c}|^2P_c}{
P_p + \sigma^2 } \right)\right].
$$
 \item For $Q=3$. We have $C_c|_{(Q=3)} = C_p|_{(Q=3)} = 0$.
 \item For $Q=4$ we have the baseband equations as given in equation
(\ref{nideq}) $$Y_p = h_pX_p + Z_p.$$ Hence, the capacity of PU given $Q=4$
is $$C_p|_{(Q=4)} =
\mathbb{E}_{|h_{p}|}\left[\log\left(1+\dfrac{|h_{p}|^2P_p}{\sigma^2 }
\right)\right].$$
Note that $C_c|_{(Q=4)}=0$.
\end{itemize}
Finally we have, the capacity for CR and PU as
\begin{align}
 C_c' \ & = \sum_{i=1}^4C_c|_{(Q=i)}P(Q=i) \nonumber \\
 \ & = (1-p)((1-p_{fa})A_p+p_{fa}B_p)) \\
 C_p' \ & = \sum_{i=1}^4C_p|_{(Q=i)}P(Q=i) \nonumber \\
 \ & = p(1-p_{md})A_c + (1-p)p_{fa}B_c
\end{align}
These follow from the distribution of the time sharing random variable defined
in the system model in subsection \ref{ssec22}.
\end{proof}
Note that the values of $A_p$, $B_p$, $A_c$ and $B_c$ as specified in the Lemma
\ref{rate1}, will be used throughout the paper unless otherwise specified.

The corresponding rate region is given by
\begin{align} \label{nidreq}
\hat{\mathcal{C}}(p_{fa},p_{md}) = \{ \ & (R_c,R_p)| 0\leq R_p \leq
(1-p)C_2(p_{fa}),\nonumber \\ \ &
0\leq R_c \leq (1-p)C_1'(p_{fa}) + pC_1(p_{md}),
 \mbox{ }0\leq p \leq 1 \}
\end{align}
where the rates, $R_p$ of the PU transceiver pair and $R_c$ of the CR
transceiver pair are achieved through non ideal white space filling. Also,
$C_1(p_{md}) = (1-p_{md})A_c$, $C_1'(p_{fa}) = p_{fa}B_c$ and
$C_2(p_{fa}) = (1-p_{fa})A_p + p_{fa}B_p$. Note that,
$$\hat{\mathcal{C}}(p_{fa},p_{md})\arrowvert_{(p_{fa}=p_{md}=0)} =
\mathcal{C}.$$ where $\mathcal{C}$ is the rate region of the ideal case,
specified in \ref{idreq}. In other words, $\hat{\mathcal{C}}\rightarrow
\mathcal{C}$ as $p_{fa},p_{md}$ tends to zero. This implies that for smaller
values of $(p_{fa},p_{md})$, the rate region in the non ideal case comes closer
to that of the ideal case.
Consider the rate region shown in Fig. \ref{nidrate}, the interior of the red diagonal
line along with the axes, specifies the rate region of the ideal detection and
interior of the blue line along with the axes, specifies the rate region for the
non ideal detection. Clearly, the non ideal case is the sub-region of the ideal
case as it completely lies inside the ideal case region. Moreover, the cut on
the vertical (or PU) axis, given by $C_2(p_{fa})$, is purely due to $p_{fa}$,
while that on the
horizontal (or CR) axis is due to $p_{md}$ and $p_{fa}$ both (given by
$C_1(p_{md})$, $C_1'(p_{fa})$ respectively) as
justified from the expression of $C_p'$ and $C_c'$ specified in Lemma
\ref{rate1}. Also note that for a $R_c<C_1'(p_{fa})$ there is no decrease in $R_p$.

Fig. \ref{pfrate} shows that the rate region for different values of $p_{fa}$ when the
$p_{md}$ is constant. Note that the non-ideal rate region approaches the ideal
rate region as $p_{fa}$ decreases, along the $Y$ axis only. This is because the
$p_{fa}$ effects the PU more than the CR. Fig. \ref{pmrate} shows that
the rate region for different values of $p_{md}$ when the
$p_{fa}$ is constant. Observe the change in rate region is only on the $X$ axis.
Fig.  \ref{ratepmpf} gives the rate region when both $p_{fa}$ and $p_{md}$ vary simultaneously.
\subsection{Maximum sum capacity}
The sum capacity of CR and PU is given by
\begin{align}
 C_c' + C_p' = \ & p(1-p_{md})A_c +
 (1-p)p_{fa}B_c 
 + (1-p)((1-p_{fa})A_p+p_{fa}B_p) \nonumber \\
 = \ & -A_cpp_{md} - (1-p)(A_p-B_p-B_c)p_{fa} 
 +pA_c + (1-p)A_p.
\end{align}

The maximum sum rate that the CR and PU can transmit, for a given $p$, can be
found out by solving the following optimization problem. The data rate
achievable by the CR is
\begin{align}
\max_{p_{fa},p_{md}}\ &  \bigg\lbrace -A_cpp_{md} - (1-p)(A_p-B_p-B_c)p_{fa}
+pA_c + (1-p)A_p\bigg\rbrace, \nonumber \\
\ & s.t, \mbox{ , } 0\leq p_{fa} \leq 1 \mbox{, } 0\leq p_{md} \leq 1.
\end{align}
It is easy to verify that the maximum is attained when $p_{fa}=p_{md}=0$. The
value of sum capacity will then be equal to that in the ideal case.

The inverse problem of the maximum sum rate is when we have a given $p_{fa}$
and $p_{md}$ and we want to see what value of $p$ maximizes the sum capacity.
\begin{align}
\max_{p}\ &  \bigg\lbrace -A_cpp_{md} - (1-p)(A_p-B_p-B_c)p_{fa}
+pA_c + (1-p)A_p\bigg\rbrace, \nonumber \\
\ & s.t \mbox{ , } 0\leq p \leq 1.
\end{align}
The objective function can be re-framed as
\begin{align}
\max_{p}\ &  \bigg\lbrace ((A_p - B_p - B_c)p_{fa} - A_cp_{md} + A_c - A_p)p -
(A_p - B_p - B_c)p_{fa} + A_p \bigg\rbrace, \nonumber \\
\ & s.t \mbox{ , } 0\leq p \leq 1.
\end{align}
The solution to this depends upon the term $M = \{(A_p - B_p - B_c)p_{fa} -
A_cp_{md} + A_c - A_p\}$ and is given by
\begin{align}
 p = \begin{cases}
     & 1, \mbox{ if } M>0 \\
     & 0, \mbox{ if } M<0 \\
     & \mbox{any value in } [0,1] \mbox{ if } M=0
     \end{cases}.
\end{align}
\section{Spectral Efficiency Factor in the Non Ideal Detection} \label{sec5}
In this section, we use expression of the sum capacity of CR
and PU to express spectral efficiency factor in terms of $p$, $p_{md}$
and $p_{fa}$. We look for the pairs $(p_{fa}, p_{md})$ which result in spectral
efficiency factor being more than unity for a given value of $p$ corresponding
to the case where the sum capacity in the CR
system is at least as much as was the capacity of PU alone without CR.
Then those $(p_{fa},
p_{md})$ pairs are explored which guarantee that the PU's performance loss is
within a limit. All such pairs make up admissible regions. Note that a
pair $(p_{fa},p_{md})$ that is not admissible is termed as inadmissible
and all such pairs constitute the inadmissible region.
Appropriately, we define the two types of regions (i) where the $\eta > 1$ -
called weakly admissible region and (ii) where the PU does not loose beyond a
limit - called strongly admissible region within a specified loss factor for
PU. We also find what pairs that are admissible if the PU cannot incur any loss
and term these as strong admissible pairs. We then discuss the effects of
$p_{fa}$, $p_{md}$ and $p$ and show that $p_{fa}$ has more profound
effect than $p_{md}$ on the performance of CR.

The spectral efficiency factor for the non ideal case is as shown below.
\begin{mythe}\label{thn}
The spectral efficiency factor in the non ideal case $\hat{\eta}$,
with average power constraints $P_p$ and $P_c$ for the PU and CR respectively is
given by
\begin{align}\label{nni}
\hat{\eta} =\ & \dfrac{C_c'+C_p'}{(1-p)C_p}
\end{align}
where $C_c'$ and $C_p'$ are specified in
Lemma \ref{rate1}.
\end{mythe}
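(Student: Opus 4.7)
The plan is to mechanically unpack Definition \ref{speeff} in the non-ideal setting, reading the numerator directly off Lemma \ref{rate1} and arguing carefully that the denominator matches the one used in the ideal case (Theorem 1). The statement is essentially a definitional rewriting, so the work is in identifying the three rates $D_p^{CR}$, $D_s^{CR}$, $D_p$ correctly rather than in any genuine calculation.

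First, I would identify the two ``with CR'' rates. By Lemma \ref{rate1}, the PU's and CR's ergodic rates in the non ideal time-shared channel are
\[
D_p^{CR} = C_p' = (1-p)\bigl((1-p_{fa})A_p + p_{fa}B_p\bigr), \qquad D_s^{CR} = C_c' = p(1-p_{md})A_c + (1-p)p_{fa}B_c,
\]
since these expressions already average over the four states of the time-sharing variable $Q$ from Definition \ref{time} and therefore give the realized long-run rates of the PU and CR under imperfect detection.

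The only step requiring a short argument is the baseline term $D_p$. With the CR absent, the PU's channel access is governed purely by its own traffic pattern: by the model of Section \ref{sec2}, the PU transmits in a given slot with probability $1-p$ and the channel is idle with probability $p$, independently of the CR. During active slots the PU sees the single-link Gaussian flat-fading channel $Y_p = h_p X_p + Z_p$, whose ergodic capacity is $C_p$ from (\ref{sim}). Averaging over the occupancy pattern gives $D_p = (1-p)C_p$, which is exactly the quantity that appears as the denominator of $\eta$ in Theorem 1, confirming that the reference point is the same in both settings.

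Substituting these three quantities into the ratio of Definition \ref{speeff} yields $\hat{\eta} = (C_p' + C_c')/((1-p)C_p)$, i.e.\ (\ref{nni}). The main thing to be watchful about, more a conceptual pitfall than a mathematical obstacle, is not to confuse $D_p$ with $C_p$: the correct baseline is the PU's realized rate $(1-p)C_p$, not its instantaneous link capacity $C_p$, since the spectral efficiency factor must compare equally-averaged quantities so that $\hat{\eta}\to \eta$ as $(p_{fa},p_{md})\to(0,0)$, consistent with $\hat{\mathcal{C}}\to\mathcal{C}$ observed after (\ref{nidreq}).
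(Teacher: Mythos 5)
Your proposal is correct and takes essentially the same route as the paper, which likewise proves the theorem by substituting $D_p^{CR}=C_p'$ and $D_s^{CR}=C_c'$ from Lemma \ref{rate1} and the baseline $D_p=(1-p)C_p$ into Definition \ref{speeff}. Your version merely spells out the identification of the three rates (in particular the baseline $(1-p)C_p$ rather than $C_p$) that the paper's one-line proof leaves implicit.
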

\begin{proof}
Looking at definition \ref{speeff} and the definition of the
achievable data rates in the non ideal case from Lemma \ref{rate1} we have the
result.
\end{proof}
In the ideal case the spectral efficiency factor depended only on one parameter $p$, while in the non ideal case here it depends upon $p_{fa}$ and $p_{md}$ as well. Hence the following corollary helps in studying the rate at which spectral efficiency increases/decreases with change in any one of these parameters keeping the others fixed.
\begin{mycor}
 Spectral efficiency factor $\hat{n}$ increases with $p$ for a fixed $p_{md}$ with a rate given by 
 \begin{align}
  \dfrac{1}{(1-p)^2}\left(\dfrac{A_c(1-p_{md})}{A_p}\right).
 \end{align}
\end{mycor}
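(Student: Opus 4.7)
The plan is to compute $d\hat{\eta}/dp$ directly from the formula in Theorem~\ref{thn}. Specifically, I would substitute $C_p=A_p$ from~\eqref{sim} together with the expressions for $C_c'$ and $C_p'$ from Lemma~\ref{rate1} into $\hat{\eta}=(C_c'+C_p')/((1-p)C_p)$, hold $p_{fa}$ and $p_{md}$ fixed, and differentiate in $p$.

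The key algebraic observation that makes the calculation painless is that $C_p'=(1-p)((1-p_{fa})A_p+p_{fa}B_p)$ and the second summand $(1-p)p_{fa}B_c$ of $C_c'$ both carry an explicit factor of $(1-p)$, which cancels against the denominator $(1-p)A_p$. Consequently, after simplification,
\begin{align*}
\hat{\eta}=\frac{p(1-p_{md})A_c}{(1-p)A_p}+\frac{p_{fa}B_c}{A_p}+(1-p_{fa})+\frac{p_{fa}B_p}{A_p},
\end{align*}
so only the first summand depends on $p$. Differentiating it through $\dfrac{d}{dp}\!\left(\dfrac{p}{1-p}\right)=\dfrac{1}{(1-p)^2}$ immediately yields the claimed rate $\dfrac{1}{(1-p)^2}\cdot\dfrac{A_c(1-p_{md})}{A_p}$, and strict positivity (for $p_{md}<1$ and $A_c,A_p>0$) gives the monotonicity asserted in the corollary.

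There is essentially no obstacle in the argument; it is routine calculus once the substitution is carried out. The only subtlety worth flagging in the write-up is the cancellation step that eliminates every $p_{fa}$-, $B_c$- and $B_p$-dependent contribution from the derivative. This is what makes the answer depend only on $p_{md}$ and on the ratio $A_c/A_p$ of the pure fading capacities, with no trace of the interference-channel quantities $B_p$ and $B_c$. I would highlight this cancellation explicitly, since it is mildly surprising and makes the result reduce cleanly to Corollary~\ref{cor1} for the ideal case upon setting $p_{md}=0$ (and, implicitly, any fixed $p_{fa}$).
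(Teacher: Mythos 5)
Your proof is correct and takes essentially the same route as the paper, whose entire argument is to partially differentiate (\ref{nni}) with respect to $p$; you simply make explicit the substitution of $C_p'$, $C_c'$ and $C_p=A_p$, the cancellation of the $(1-p)$ factors, and the resulting derivative $\dfrac{1}{(1-p)^2}\cdot\dfrac{A_c(1-p_{md})}{A_p}$. The cancellation you flag (all $p_{fa}$-, $B_p$- and $B_c$-dependent terms dropping out of the derivative) is indeed the observation the paper itself makes in the discussion following these corollaries, so it is a worthwhile addition but not a different method.
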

\begin{proof}
 By partially differentiating (\ref{nni}) w.r.t $p$ we get the result.
\end{proof}
\begin{mycor}
 Spectral efficiency factor $\hat{n}$ decreases with $p_{md}$ for a fixed $p$ with a rate given by 
 \begin{align}
  \dfrac{pA_c}{(1-p)A_p}.
 \end{align}
\end{mycor}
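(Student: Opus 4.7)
The plan is to mirror the approach used for the preceding corollary, namely to compute the partial derivative of $\hat{\eta}$ with respect to $p_{md}$ and read off both its sign and magnitude. Since the claim concerns monotonicity in $p_{md}$ for a fixed value of $p$ (and implicitly fixed $p_{fa}$), I would treat $p$, $p_{fa}$, $A_p$, $A_c$, $B_p$, $B_c$ as constants and differentiate the closed-form expression of $\hat{\eta}$ given in Theorem \ref{thn}.

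First I would substitute the formulas for $C_c'$ and $C_p'$ from Lemma \ref{rate1} together with $C_p=A_p$ into the definition $\hat{\eta}=(C_c'+C_p')/((1-p)C_p)$, so that the whole quantity becomes an explicit affine function of $p_{md}$. Inspecting the numerator, the only term involving $p_{md}$ is $p(1-p_{md})A_c$ coming from $C_c'$; the contributions from $C_p'$ and from the $(1-p)p_{fa}B_c$ term in $C_c'$ have no $p_{md}$ dependence, so $\partial_{p_{md}}(C_c'+C_p')=-pA_c$. Dividing by the constant denominator $(1-p)A_p$ yields
\begin{equation*}
\frac{\partial \hat{\eta}}{\partial p_{md}} \;=\; -\,\frac{pA_c}{(1-p)A_p}.
\end{equation*}

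Since $p\in[0,1)$, $A_c>0$ and $A_p>0$ (both are expectations of strictly positive log-terms for nondegenerate fading), this derivative is strictly negative, which proves that $\hat{\eta}$ is monotonically decreasing in $p_{md}$, and its instantaneous rate of decrease is $pA_c/((1-p)A_p)$ as claimed. I don't anticipate any real obstacle: the computation is essentially a one-line differentiation, with the only subtlety being to notice that the $(1-p)p_{fa}B_c$ cross-term in $C_c'$ contains no $p_{md}$ and therefore drops out, so that the derivative is identical in form to what one would obtain by formally differentiating a single-variable affine function.
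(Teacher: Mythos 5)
Your proposal is correct and follows exactly the paper's own route: the paper's proof is the one-line statement that the result follows by partially differentiating the expression for $\hat{\eta}$ in Theorem \ref{thn} with respect to $p_{md}$, which is precisely the computation you carry out (your version merely makes explicit that $C_p=A_p$ and that only the $p(1-p_{md})A_c$ term depends on $p_{md}$). No gaps.
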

\begin{proof}
 By partially differentiating (\ref{nni}) w.r.t $p_{md}$ we get the result.
\end{proof}
\begin{mycor}
 Spectral efficiency factor $\hat{n}$ decreases with $p_{fa}$ with a rate given by 
 \begin{align}
  \dfrac{(A_p-B_p-B_c)}{A_p}.
 \end{align}
\end{mycor}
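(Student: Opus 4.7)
The plan is to compute $\partial \hat{\eta}/\partial p_{fa}$ directly, treating $p$, $p_{md}$, $P_p$, $P_c$ and all fading statistics as fixed, using the closed-form expression for $\hat{\eta}$ from Theorem \ref{thn} together with the components $C_p'$ and $C_c'$ from Lemma \ref{rate1}. The first simplification I would carry out is to note that the denominator $(1-p)C_p$ of $\hat{\eta}$ equals $(1-p)A_p$, since $C_p$ in equation (\ref{sim}) is identical to $A_p$ in Lemma \ref{rate1}; this denominator is free of $p_{fa}$, so only the numerator needs to be differentiated.

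Next I would collect the $p_{fa}$-dependent terms in the numerator. Writing
$$C_p' + C_c' = p(1-p_{md})A_c + (1-p)A_p + (1-p)p_{fa}\bigl(B_p + B_c - A_p\bigr),$$
the linear coefficient in $p_{fa}$ is read off immediately, and differentiation yields $\partial_{p_{fa}}(C_p' + C_c') = (1-p)(B_p + B_c - A_p)$. Dividing by $(1-p)A_p$ cancels the factor of $(1-p)$ and produces the slope $-(A_p - B_p - B_c)/A_p$, whose magnitude matches the claimed rate.

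The only point worth flagging, which one might call the main obstacle, is the sign: the statement asserts that $\hat{\eta}$ \emph{decreases} in $p_{fa}$, which requires $A_p \geq B_p + B_c$. I would justify this by appealing to the physical interpretation of Lemma \ref{rate1}: $A_p$ is the interference-free PU capacity, while both $B_p$ and $B_c$ are interference-limited capacities whose SINR arguments carry an extra $P_c$ or $P_p$ in the noise denominator, so each is strictly smaller than the corresponding interference-free capacity. In the operating regimes of interest for this paper the sum $B_p + B_c$ is still dominated by $A_p$, guaranteeing that the derivative is non-positive and the monotonicity claim holds; the modulus of the derivative is the advertised rate of decrease.
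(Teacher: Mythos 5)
Your computation matches the paper's proof exactly: the paper simply partially differentiates $\hat{\eta}$ in (\ref{nni}) with respect to $p_{fa}$, and your expansion of the numerator $C_p'+C_c'$ and cancellation of the $(1-p)$ factor against the denominator $(1-p)A_p$ is the same calculation carried out explicitly. Your added observation that the word ``decreases'' requires $A_p \geq B_p + B_c$ is a point the paper leaves implicit (it assumes this sign throughout, e.g.\ in Theorem \ref{weakk} and in concluding that the sum capacity is maximized at $p_{fa}=p_{md}=0$), so flagging it is a small improvement rather than a deviation in approach.
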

\begin{proof}
 By partially differentiating (\ref{nni}) w.r.t $p_{fa}$ we get the result.
\end{proof}
It is interesting to note that the rate of increase in $\hat{\eta}$ with $p$ is independent of $p_{fa}$. Also the rate of decrease in $\hat{\eta}$ with $p_{md}$ is independent of $p_{fa}$. Moreover, the rate of decrease in $\hat{\eta}$ with $p_{fa}$ is independent of both $p_{fa}$ and $p_{md}$. Hence, we can say that $p_{fa}$ decreases the spectral efficiency factor of the system equally at all occupancies.

Observe from Theorem \ref{thn} and Lemma \ref{rate1} that not all pairs of
$(p_{fa},p_{md})$, will result in the spectral efficiency factor to be greater
than one. This is because if there is high $p_{fa}$ then due to
interference both CR and PU will have reduced transmission rates. If there is
high $p_{md}$ then the CR will have less transmission rate. To study precisely
the effects of $(p_{fa},p_{md})$ and to characterize them in a region for which
$\hat{\eta}\geq1$ we define the notion of weak admissibility as follows
\begin{mydef}
A pair $(p_{fa},p_{md})$ is said to be weakly admissible for a Bernoulli
occupancy $p$ and average power constraints $P_p$ and $P_c$ on the PU and CR
respectively, if the spectral efficiency factor $\hat{\eta} \geq 1$.
\end{mydef}
When the CR interferes in the PU communication, there will be a loss to the
PU. Suppose we are given a limit beyond which the PU cannot incur a loss. It
is of interest to see what values of $(p_{fa},p_{md})$ can ensure that the loss
to the PU is below this limit. We model this in the form of the loss
factor. The loss factor of the PU is the fraction of data rate it looses due to
the intervention of the CR in its communication. Formally,
\begin{mydef} \label{gam}
The loss factor of the PU is
\begin{align}
\gamma  = \dfrac{C_p'}{(1-p)C_p} .\nonumber
\end{align}
\end{mydef}
Note that $0< \gamma \leq 1$. We now wish to see what $(p_{fa},p_{md})$ will
guarantee a loss below $\gamma $. This gives rise to notion of strong
admissibility with a loss factor $\gamma$ as follows,
\begin{mydef}
A pair $(p_{fa},p_{md})$ is said to be strongly admissible with a loss factor
$\gamma$ for a Bernoulli occupancy $p$ and average power constraints $P_p$ and
$P_c$ on the PU and CR respectively, if $C_p'\geq \gamma(1-p)C_p$. In
particular if $\gamma = 1$ we say that $(p_{fa},p_{md})$ are strongly
admissible.
\end{mydef}
Now, we characterize the admissible regions based on the above definitions.
\subsection{Characterization of  Admissible Regions} \label{num}
In this subsection we characterize three types of admissible regions viz
(i) weakly admissible region - Theorem \ref{weakk}, (ii) strongly admissible
region with a loss factor $\gamma$ - Theorem \ref{los} and (iii) strongly
admissible region - Theorem \ref{strr}.
\begin{mythe} \label{weakk}
The pair $(p_{fa},p_{md})$ are weakly admissible if,
\begin{enumerate}[(1)]
 \item $0\leq p_{fa} \leq 1$.
 \item $p_{md}\leq 1 -\left(\dfrac{1-p}{p}\right)\left(\dfrac{A_p - B_p -
B_c}{A_c}\right)p_{fa}$.
\end{enumerate}
\end{mythe}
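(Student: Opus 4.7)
The plan is to reduce the defining inequality $\hat{\eta}\geq 1$ to the claimed linear bound on $p_{md}$ by straightforward algebra, using the closed-form expressions for $C_p'$ and $C_c'$ from Lemma~\ref{rate1} together with the identity $C_p=A_p$ from (\ref{sim}).

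First, I would unfold Theorem~\ref{thn} and write $\hat\eta\geq 1$ as
\begin{equation*}
C_c'+C_p'\;\geq\;(1-p)\,A_p.
\end{equation*}
Substituting $C_p'=(1-p)\bigl((1-p_{fa})A_p+p_{fa}B_p\bigr)$ and $C_c'=p(1-p_{md})A_c+(1-p)p_{fa}B_c$ on the left-hand side, the term $(1-p)(1-p_{fa})A_p$ combines with $-(1-p)A_p$ to leave $-(1-p)p_{fa}A_p$. Collecting the remaining terms in $p_{fa}$ would then yield the equivalent inequality
\begin{equation*}
p(1-p_{md})A_c\;\geq\;(1-p)p_{fa}\,(A_p-B_p-B_c).
\end{equation*}

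Next, I would divide both sides by $pA_c>0$ (recall $p>0$ is assumed, otherwise the statement is vacuous since there is no CR transmit opportunity), and then rearrange to isolate $p_{md}$, obtaining exactly the bound stated in condition~(2). Condition~(1) is just the natural range of a probability and follows from the model; the theorem essentially says that every $p_{fa}\in[0,1]$ is allowed provided $p_{md}$ respects the linear envelope in $p_{fa}$.

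The only subtlety worth flagging is the sign of the coefficient $A_p-B_p-B_c$. Since $B_p$ is the PU rate in the presence of CR interference (strictly smaller than $A_p$) and $B_c$ is the CR rate while being interfered with by the PU, one expects $A_p-B_p-B_c>0$ in the operating regime of interest; this is what makes the inequality genuinely restrictive and guarantees that dividing preserves its direction. If instead $A_p-B_p-B_c\leq 0$, the right-hand side of the rearranged inequality is non-positive, so every $(p_{fa},p_{md})\in[0,1]^2$ is weakly admissible and the bound in~(2) is automatically satisfied (its right-hand side is $\geq 1$). Hence the main ``obstacle'' is really only this case analysis on the sign of $A_p-B_p-B_c$; the algebra itself is routine.
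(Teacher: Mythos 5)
Your proposal is correct and follows essentially the same route as the paper's own proof: unfold $\hat{\eta}\geq 1$ via Theorem~\ref{thn} and Lemma~\ref{rate1}, cancel the $(1-p)A_p$ terms, group the $p_{fa}$ and $p_{md}$ terms, and divide by $pA_c$ to isolate $p_{md}$. Your added remark on the sign of $A_p-B_p-B_c$ (and the degenerate case where it is nonpositive) is a sensible clarification that the paper leaves implicit, but it does not change the argument.
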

\begin{proof}
The first constraint is kept there to realize that the other constraint
can be satisfied for almost all values of $(p_{fa},p_{md})$ if $p_{fa}$ is not
confined to the interval of $[0,1]$.
From definition of weak admissibility
\begin{align}
 \ & \hat{\eta}\geq 1 \nonumber \mbox{ or,} \\
 \ & (1-p)((1-p_{fa})A_p+p_{fa}B_p)) + p(1-p_{md})A_c + (1-p)p_{fa}B_c \geq
(1-p)A_p
\end{align}
Grouping terms of $p_{fa}$ and $p_{md}$ together, we have
\begin{align}
\ & (1-p)[A_p - B_p - B_c]p_{fa} + pA_cp_{md} \leq pA_c \nonumber \mbox{ or,} \\
\ & p_{md}\leq 1 -\left(\dfrac{1-p}{p}\right)\left(\dfrac{A_p - B_p -
B_c}{A_c}\right)p_{fa}. \nonumber
\end{align}
\end{proof}
Observe that the region for strong admissibility with a factor
$\gamma$, only depends on $p_{fa}$. This is because strong admissibility is
concerned with the data rate of PU only. For a $p_{fa}$ that is strongly
admissible with a factor $\gamma$ all values of $p_{md}$ are admissible.
Therefore we plot $p_{fa}$ for strong admissibility. Also we provide
characterization of strong admissible regions with a loss factor $\gamma$ in
terms of $p_{fa}$ only as follows
\begin{mythe}\label{los}
The value of $p_{fa}$ is strongly admissible for a given loss factor $\gamma$ of
the PU if,
$$
0 \leq p_{fa} \leq
\min\bigg\lbrace{1,\dfrac{A_p(1-\gamma)}{A_p-B_p}}\bigg\rbrace.
$$
\end{mythe}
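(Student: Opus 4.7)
The plan is to start from the definition of strong admissibility with loss factor $\gamma$, namely the inequality $C_p' \geq \gamma(1-p)C_p$, and manipulate it until it is expressed purely as a bound on $p_{fa}$. First I would substitute the expression for $C_p'$ from Lemma \ref{rate1}, i.e.\ $C_p' = (1-p)\bigl[(1-p_{fa})A_p + p_{fa}B_p\bigr]$, and recognise that $C_p = A_p$ from (\ref{sim}) and the definition of $A_p$ in Lemma \ref{rate1}. Assuming $p<1$, dividing by $(1-p)$ reduces the admissibility condition to a one-variable affine inequality in $p_{fa}$:
\begin{align*}
(1-p_{fa})A_p + p_{fa}B_p \;\geq\; \gamma A_p.
\end{align*}

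Next I would rearrange this as $A_p(1-\gamma) \geq p_{fa}(A_p - B_p)$ and observe that $A_p - B_p \geq 0$, because the expectand defining $B_p$ has the extra term $P_c$ added to the noise variance in the denominator and therefore $B_p = \mathbb{E}\bigl[\log(1+|h_{pp}|^2 P_p/(P_c+\sigma^2))\bigr] \leq A_p$ (using $h_{pp}$ and $h_p$ identically distributed by the paper's setup). Since $A_p - B_p \geq 0$, dividing both sides by it (in the strict case) preserves the inequality and yields $p_{fa} \leq A_p(1-\gamma)/(A_p-B_p)$. Combining this with the natural constraint $0 \leq p_{fa} \leq 1$ that a probability must satisfy produces the claimed bound with the $\min\{\cdot,\cdot\}$.

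The only genuine subtlety, and what I would flag as the main place to be careful, is the degenerate case $A_p = B_p$: here the right-hand side of the extracted bound is vacuous, but by monotonicity the original inequality is then either automatically satisfied (if $\gamma \leq 1$, which holds by Definition \ref{gam}) or never, so consistency with the $\min$ must be checked. Similarly, the boundary case $p=1$ (PU always transmitting with probability one being never free) is outside the model's regime for CR operation and can be excluded at the outset. Modulo these two easy checks, the theorem follows by the direct algebraic chain above.
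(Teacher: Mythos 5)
Your proof is correct and follows essentially the same route as the paper's: substitute $C_p'$ from Lemma \ref{rate1} and $C_p=A_p$ into the strong-admissibility condition, cancel $(1-p)$, rearrange to $p_{fa}\leq A_p(1-\gamma)/(A_p-B_p)$ using $A_p\geq B_p$, and intersect with the constraint $p_{fa}\leq 1$. If anything you are slightly more careful than the paper, whose displayed intermediate inequality $(1-p_{fa})A_p+p_{fa}B_p\leq\gamma A_p$ has its direction reversed (evidently a typo, since the definition requires $C_p'\geq\gamma(1-p)C_p$) and which does not address the degenerate cases $A_p=B_p$ and $p=1$ that you flag.
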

\begin{proof}
We have, $\gamma \leq 1$. From (1), Lemma \ref{rate1} and definition \ref{gam}
we have
\begin{align}
\ & (1-p_{fa})A_p + p_{fa}B_p \leq \gamma A_p \nonumber \mbox{ or,} \\
\ & p_{fa} \leq \dfrac{A_p(1-\gamma)}{A_p-B_p}.
\end{align}
Now since, $p_{fa}\leq 1$, the result follows.
\end{proof}
\begin{mythe}\label{strr}
Strongly admissible region is characterized by $p_{fa}=0$.
\end{mythe}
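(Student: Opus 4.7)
The plan is to derive this as the $\gamma = 1$ specialization of Theorem \ref{los}, but since the statement is crisp and short, I would first verify that Theorem \ref{los} indeed degenerates to the claim, and then (in case the reader wants a standalone argument) give a direct derivation from Lemma \ref{rate1} and Definition \ref{gam}.

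First I would plug $\gamma = 1$ into the upper bound of Theorem \ref{los}. The bound becomes $p_{fa} \leq \min\{1, A_p(1-1)/(A_p - B_p)\} = \min\{1, 0\} = 0$. Combined with the implicit $p_{fa} \geq 0$, this forces $p_{fa} = 0$, which is exactly the claim. The only subtlety to flag is that the division $A_p/(A_p - B_p)$ requires $A_p \neq B_p$; I would note that $A_p > B_p$ strictly because $A_p$ is $\mathbb{E}[\log(1+|h_p|^2 P_p/\sigma^2)]$ while $B_p$ is $\mathbb{E}[\log(1+|h_{pp}|^2 P_p/(P_c + \sigma^2))]$, and $P_c + \sigma^2 > \sigma^2$ (with $h_p$ and $h_{pp}$ having the same distribution).

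For a self-contained proof I would instead start from the definition of strong admissibility ($\gamma = 1$), which reads $C_p' \geq (1-p) C_p$. Substituting the expressions from Lemma \ref{rate1} and equation (\ref{sim}), this simplifies to
\begin{align*}
(1-p)\bigl((1-p_{fa})A_p + p_{fa} B_p\bigr) \ &\geq\ (1-p)A_p, \\
p_{fa}(B_p - A_p) \ &\geq\ 0.
\end{align*}
Since $B_p - A_p < 0$ strictly and $p_{fa} \geq 0$, the inequality can hold only with equality, which forces $p_{fa} = 0$. Conversely, $p_{fa} = 0$ immediately gives $C_p' = (1-p)A_p = (1-p)C_p$, so strong admissibility is attained. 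Hence the strongly admissible region is exactly $\{p_{fa} = 0\}$ (with $p_{md}$ unconstrained, consistent with the remark preceding Theorem \ref{los}).

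The main (and only) obstacle is the minor one of justifying $A_p > B_p$ strictly; this is immediate from the monotonicity of $\log(1 + x)$ in $x$ and the fact that $P_c > 0$, so there is no real difficulty. The result is essentially a sanity check confirming that no nonzero false-alarm probability can preserve the PU's full stand-alone rate.
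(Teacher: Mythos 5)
Your proposal is correct and the self-contained argument is essentially the paper's own proof: substitute $\gamma=1$ into the definition, cancel $(1-p)$, and reduce to $p_{fa}(A_p-B_p)=0$, then rule out $A_p=B_p$. Your justification that $A_p>B_p$ strictly (since $P_c+\sigma^2>\sigma^2$ and $h_p$, $h_{pp}$ are identically distributed) is actually cleaner than the paper's somewhat garbled remark about ``$P_s\ll\sigma^2$'', and your converse direction and the specialization of Theorem~\ref{los} are harmless additions.
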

\begin{proof}
For strong admissibility we have $\gamma =1$. Thus,
\begin{align}
 (1-p)((1-p_{fa})A_p+p_{fa}B_p)) = (1-p)C_p \nonumber \\
\end{align}
Note that, $C_p=A_p$, hence we have
\begin{align}
 p_{fa}(A_p - B_p) = 0.
\end{align}
$A_p = B_p$ in the limit would imply that $P_s \ll \sigma^2$ which is never the
case. Hence, $p_{fa}=0$.
\end{proof}
Theorem \ref{strr} implies that any detection technique with $p_{fa}>0$ will
lead to spectral efficiency loss for the PU. Since all the practical techniques,
will have $p_{fa}>0$, we have introduced the notion of strong admissibility
with a loss factor.

Note that the weak admissible region occupies a non zero area in the $(p_{fa},p_{md})$
plane for $p>0$ as evident from Theorem \ref{weakk}. An instance of the weak
and strong admissible region is shown in Fig. \ref{weak}. The set of
strongly admissible $(p_{fa},p_{md})$ pairs (i.e. $\gamma =1$) is a sub-region
of the weakly admissible pairs $(p_{fa},p_{md})$ as shown in Fig. \ref{weak}.
\subsection{Strong Admissible $p_{fa}$ with a loss factor $\gamma$.}
A typical curve of the strongly admissible $p_{fa}$  with a loss factor
$\gamma$ against $1-\gamma$ is shown in Fig. \ref{loss}. As we decrease
$\gamma$, $1-\gamma$ increases the rate at which the admissible value of
$p_{fa}$ increases is $\dfrac{A_p}{A_p- B_p}$.
\begin{mythe}
The range of strongly admissible values of $p_{fa}$ for a loss factor $\gamma$
is independent of the PU channel occupancy. Furthermore, for a loss factor of
$\gamma \leq \dfrac{B_p}{A_p}$ the range of values is
the entire interval $[0,1]$.
\end{mythe}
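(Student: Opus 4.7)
The plan is to derive this statement essentially as a corollary of Theorem~\ref{los}, which already gives the explicit characterization
$$
0 \leq p_{fa} \leq \min\bigl\{1,\, \tfrac{A_p(1-\gamma)}{A_p - B_p}\bigr\}.
$$
The first assertion is then almost immediate: looking at the right-hand side, the quantities $A_p$ and $B_p$ are expectations involving only the fading distributions and the power constraints $P_p, P_c$ (see Lemma~\ref{rate1}); neither depends on $p$, and neither does $\gamma$ as a chosen loss target. So I would simply point out that the upper bound $\min\{1, A_p(1-\gamma)/(A_p-B_p)\}$ contains no occurrence of $p$, and hence the admissible range of $p_{fa}$ is independent of PU occupancy.

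For the second assertion, I would determine exactly when the $\min$ evaluates to $1$ (so that the admissible interval is the full $[0,1]$). This requires
$$
\tfrac{A_p(1-\gamma)}{A_p - B_p} \ \geq\ 1,
$$
where I note $A_p > B_p$ since dividing by $P_c + \sigma^2$ rather than $\sigma^2$ inside the logarithm yields a strictly smaller expected value, so the denominator is positive and the inequality is preserved under cross-multiplication. A line of algebra rearranges this to $A_p \gamma \leq B_p$, equivalently $\gamma \leq B_p/A_p$, which is exactly the stated threshold. Under this condition the upper endpoint of Theorem~\ref{los} is $1$, so every $p_{fa}\in[0,1]$ is strongly admissible with loss factor $\gamma$.

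There is essentially no obstacle: the theorem is a structural reading of Theorem~\ref{los} combined with one short inequality manipulation. The only thing worth being careful about is justifying $A_p > B_p$ so that the cross-multiplication and the identification of the binding piece of the $\min$ are valid; this follows from the monotonicity of $\log(1+x)$ together with $P_c \geq 0$, and is the same fact invoked at the end of Theorem~\ref{strr}. Once that is noted, the two claims fall out directly from inspecting the bound.
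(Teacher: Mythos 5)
Your proposal is correct and follows essentially the same route as the paper's own proof: both the independence claim and the full-interval claim are read directly off the bound in Theorem~\ref{los}, with the threshold $\gamma \leq B_p/A_p$ obtained by setting the upper endpoint equal to $1$ and rearranging. Your additional care in justifying $A_p > B_p$ is a small refinement the paper leaves implicit, but the argument is the same.
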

\begin{proof}
From Theorem \ref{los} it is clear that admissible $p_{fa}$ values for a given
$\gamma$ is not dependent on $p$. The second statement of the Theorem, follows
from the Theorem \ref{los} substituting $p_{fa}=1$ in the second
inequality.
\end{proof}
We call $\gamma = \dfrac{B_p}{A_p}$ as the full admissible point. Observe that
for $\gamma = 1$, i.e. the strong admissible case $p_{fa}=0$. Also note that the
admissible values of $p_{fa}$ are highly dependent on the relative power levels
of the PU and CR ($RS$) transmission. As the value $RS$ increases the values
of
strongly admissible $p_{fa}$ for a given loss factor also increases as shown in
Fig. \ref{pflos}. Similarly as 
$RS$ increases the value of the full admissible point increases as illustrated
in Fig. \ref{bptab}.

Next we consider the weak admissible region.  Fig. \ref{exp4} shows the spectral efficiency factor plot
for a particular value of $p$ and $RS$ with the admissible and inadmissible
regions marked. Fig. \ref{spec} shows the spectral
efficiency factor for various values of $p$ for a constant $RS$. Observe that
the admissible region increase as $p$ increases. This implies that for channel
with low primary occupancy a lousy detector (with high $p_{fa}, p_{md}$) will
also result in increase in spectral efficiency.   Similarly for $p_{fa}=0$ the
PU will transmit its original capacity without CR and hence any value of
$p_{md}$ would be weakly admissible. Also observe from Fig. \ref{snrvar} that
when $p$ is fixed and $RS$ is increased, the admissible region increases. This
is justified because, when the $RS$ is more the CR transmits relatively lesser
power which causes lesser interference when compared to when $RS$ is lower.
This translates into higher values of admissible $p_{fa}$ for higher $RS$ when
compared to the lower $RS$.

We show the weakly admissible and inadmissible regions in the Figs. \ref{thrd}
and \ref{pffm}.

\section{Comparison of Various Detection Techniques in light of the Admissible
Regions}\label{sec6}
We compare few of the standard detection techniques viz. energy detection (ed),
matched filter (mf) and magnitude squared coherence (msc).

For every detection technique (dt) we have a relation $f_{dt}$ between the
$p_{md}$ and $p_{fa}$, i.e. $p_{md}= f_{dt}(p_{fa})$ for a given average
received
signal to noise ratio (SNR). The function $f_{dt}$ for all standard detection
techniques is monotone non increasing. For a given $p_{fa}$ we wish to see
whether any value of $p_{md}$ is admissible.

Denote $h(p_{fa}) = 1 - \left(\dfrac{1-p}{p}\right)\left(\dfrac{A_p - B_p
-B_c}{A_c}\right)p_{fa}$. We then wish to find for a detection technique whether
for a given $p_{fa}$, $f_{dt}(p_{fa}) \leq h(p_{fa})$ holds. If yes, then we say
that the detection technique operating at that $(p_{fa},p_{md})$ pair is
admissible and consequently all such points on the curve of detection technique
are referred to as admissible region of the detection technique. Note that,
this admissible region of a detection technique is dependent on average
received SNR.

We provide the functions $f_{dt}$ for three detection techniques
mentioned above. In what follows the value of $p_{md}$ is a function of $p_{fa}$ for all other parameters fixed.
\begin{align}
\ & p_{mded} = f_{ed}(p_{faed}) =
Q_{\chi^2}\left(2P^{-1}\left(1-p_{faed},L,\dfrac{MLP_p}{2\sigma^2}\right)\right), \mbox{ for a fixed $M$,$L$ and $P_p$,} \nonumber
\\
\ & p_{mdmf} = f_{mf}(p_{famf}) = 1- Q\left(Q^{-1}(p_{famf}) -
\dfrac{E}{\sigma\sqrt{E}}\right), \mbox{ for a fixed $E$ and $\sigma$,} \nonumber \\
\ & p_{mdmsc} = f_{msc}(p_{famsc}) = P_{CDF}\left(\left(1
- p_{famsc}^{(\frac{1}{L-1})}\right)|L,|\gamma|^2\right),\mbox{ for a fixed $L$ and $\gamma$.}
\end{align}
where, $Q_{\chi^2}(x,v,\delta)$ is the non central $\chi^2$ distribution of $x$
with $v$ degrees of freedom and positive non centrality parameter $\delta$ and
$P^{-1}(x,a)$ is the inverse lower incomplete $\Gamma$ function. $Q(x)$ is the
standard Q-function with $Q^{-1}$ being the inverse $Q$ function.
$P_{CDF}(|\hat{\gamma}|^2|L,|\gamma|^2)$ is as given in \cite{cart}. Here, $L$
denotes the number of disjoint sequences that an original sequence of length $N$
is divided into each of length $M$ so that $M=\dfrac{N}{L}$. $E$ is
the energy of the transmitted signal. $|\gamma|^2$ is the magnitude squared
coherence which is the magnitude squared of the spectral coherence. For more
details, the reader is referred to \cite{murthy} and the references therein. 

We now plot $p_{md}$ vs. $p_{fa}$ and look for those areas where the curve of
detection region lies inside the weak admissible region. For all points lying on
that part of the curve the detection technique is admissible.

Figs. \ref{snrdet24}, \ref{snrdet26} show that, for low SNR regimes, all
values of $(p_{fa},p_{md})$ offered by the matched filter detector are
admissible. For the magnitude squared coherence detector  majority of the
points on the curve are admissible except those at higher end of
$p_{fa}$ values. For the energy detector many points on the curve lie outside
the admissible region. Hence, using energy detector, one has to be 
careful about the operating point $(p_{fa},p_{md})$. It follows that the matched filter
performs better than magnitude squared coherence detector which performs better
than the energy detector. These observations are in line with those in \cite{wang}.
\section{Conclusion} \label{sec7}
In this paper we have capacity region of a point to point CR channel. For ideal
detection of PU occupancy we have seen the limits on rates. We have defined
spectral efficiency factor and studied its variations with respect to the
occupancy probability for different relative power levels of the PU and CR.
Then we built the case where the CR performs non ideal detection of the PU
presence and specified capacity region for the same. With expressions derived
we brought out the limits, under different occupancies and relative PU and CR
power levels, on the false alarm and missed detection for which the spectral
efficiency of the overall system increases and also for which the PU is not at a
loss greater than a specified value. We have discussed the effects of false
alarm, missed detection and channel occupancy with respect to spectral
efficiency factor. Finally we compared our result with standard detection
techniques viz. energy detection, matched filter and magnitude squared coherence
and found that the matched filter performs best followed by magnitude squared
coherence followed by energy detection with respect to spectral efficiency
factor.
\section*{Acknowledgment}
The work was sponsored by Department of Information Technology, Government of
India, under the grant for the project ``High Performance Cognitive Radio
Networks at IIT Bombay and IIT Hyderabad''.

\begin{figure}[H]
\centering
\tikzstyle{channel} = [draw,fill=blue!20,minimum size=3em]
\tikzstyle{line} = [draw, -latex']
\tikzstyle{par} = [draw, parabola -latex']
\def\soledge[#1,#2,#3,#4];{
        \draw[dashed,-latex,#4] (#1) -- #3 (#2);}
\begin{tikzpicture}[>=latex']
    \node (PRT) at (0,3) [circle,draw,label=above:] {$X_p$};
    \node (CRT) at (0,0) [circle,draw,label=below:] {$X_c$};
    \node (PUR) at (5,3) [circle,draw,label=above:] {$Y_p$};
    \node (CRR) at (5,0) [circle,draw,label=below:] {$Y_c$};
    \node (PUN) at (3,4.5) [circle,draw,label=above:] {$Z_p$};
    \node (CRN) at (3,-1.5) [circle,draw,label=below:] {$Z_c$};
    \node (PUA) at (3,3) [circle,draw,label=above:] {$+$};
    \node (CRA) at (3,0) [circle,draw,label=below:] {$+$};
    \draw [style = dashed, ->](CRT.west) arc (-143:-217:2.5cm);
    \path [line] (1.9,3) to node [auto] {$h_p$} (PUA.west);
    \path [line] (1.9,0) to node [auto] {$h_c$} (CRA.west);
    \path [line] (1.2,3) to (1.85,3.65) node [auto] {} ;
    \path [line] (1.2,0) to (1.9,0) node [auto] {} ;
    \path [line, dashed] (1.2,0) to (1.85,0.65) node [auto] {} ;
    \path [line, dashed] (1.2,3) to (1.9,3) node [auto] {} ;
    \path [line,dashed] (PRT.south) to node [auto] {$p$}(CRT.north);
    \draw [-] (PRT.east) to (1.2,3) node [auto]{};
    \draw [-] (CRT.east) to (1.2,0) node [auto]{};
    \draw [->] (PUN.south) to node [auto] {} (PUA.north);
    \draw [->] (CRN.north) to node [auto] {} (CRA.south);
    \draw [->] (PUA.east) to node [auto] {} (PUR.west);
    \draw [->] (CRA.east) to node [auto] {} (CRR.west);
\end{tikzpicture}
\caption{The Cognitive radio channel when CR ideally detects the PUs presence.}
\label{idcha}
\end{figure}
\begin{figure}[H]
\centering
\tikzstyle{channel} = [draw,fill=blue!20,minimum size=3em]
\tikzstyle{line} = [draw, -latex']
\tikzstyle{par} = [draw, parabola -latex']
\def\soledge[#1,#2,#3,#4];{
        \draw[dashed,-latex,#4] (#1) -- #3 (#2);}
\begin{tikzpicture}[>=latex']
    \node (PRT) at (0,3) [circle,draw,label=above:] {$X_p$};
    \node (CRT) at (0,0) [circle,draw,label=below:] {$X_c$};
    \node (PUR) at (5,3) [circle,draw,label=above:] {$Y_p$};
    \node (CRR) at (5,0) [circle,draw,label=below:] {$Y_c$};
    \node (PUN) at (3,4.5) [circle,draw,label=above:] {$Z_p$};
    \node (CRN) at (3,-1.5) [circle,draw,label=below:] {$Z_c$};
    \node (PUA) at (3,3) [circle,draw,label=above:] {$+$};
    \node (CRA) at (3,0) [circle,draw,label=below:] {$+$};
    \draw [style = dashed, ->](CRT.west) arc (-143:-217:2.5cm);
    \path [line] (1.9,3) to node [auto] {$h_{pp}$} (PUA.west);
    \path [line] (1.9,0) to node [auto] {$h_{cc}$} (CRA.west);
    \path [line] (1.2,3) to (1.85,3.65) node [right] {} ;
    \path [line] (1.2,0) to (1.9,0) node [auto] {} ;
    \path [line, dashed] (1.2,0) to (1.85,0.65) node [auto] {} ;
    \path [line, dashed] (1.2,3) to (1.9,3) node [auto] {} ;
    \path [line,dashed] (PRT.south) to node [auto] {$p$} (CRT.north);
    \draw [-] (PRT.east) to (1.2,3) node [auto]{};
    \draw [-] (CRT.east) to (1.2,0) node [auto]{};
    \draw [->,dotted] (PRT.east) to node [auto] {$h_{pc}$} (CRA.west);
    \draw [->,dotted] (CRT.east) to node [auto] {$h_{cp}$} (PUA.west);;
    \draw [->] (PUN.south) to node [auto] {} (PUA.north);
    \draw [->] (CRN.north) to node [auto] {} (CRA.south);
    \draw [->] (PUA.east) to node [auto] {} (PUR.west);
    \draw [->] (CRA.east) to node [auto] {} (CRR.west);
\end{tikzpicture}
\caption{The Cognitive radio channel when CR non ideally detects ideally detects
the PUs presence. The dashed instance shows when PU transmits alone and dotted
when PU and CR transmit together and solid when CR alone transmit.}
\label{nidcha}
\end{figure}
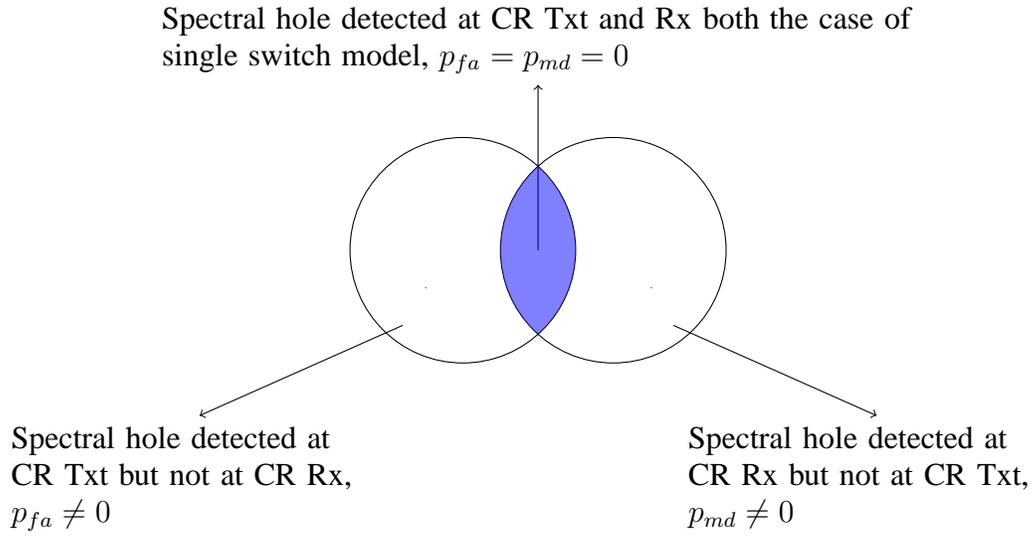
\begin{figure}[H]
 \centering
 \def\firstcircle{(0,0) circle (1.5cm)}
\def\thirdcircle{(0:2cm) circle (1.5cm)}
\begin{tikzpicture}
    \draw \firstcircle node[below] {};
    \draw \thirdcircle node [below] {};
    \draw[->](-0.8,-1) -- (-3.5,-2.2)node[below, text width = 5cm]{Spectral hole detected at CR Txt but not at CR Rx, $p_{fa}\neq0$ };
    \draw[->](2.8,-1) --(5.5,-2.2) node[below, text width = 5cm]{Spectral hole detected at CR Rx  but not at CR Txt, $p_{md}\neq0$};
    \draw[->](1,0) -- (1,2.2)node[above,text width = 10cm]{Spectral hole detected at CR Txt and Rx both the case of single switch model, $p_{fa} = p_{md} = 0$};
    \draw[-](2.5,-.5) -- node[above]{}(2.5,-.5);
    \draw[-](-0.5,-0.5) -- node[above]{}(-0.5,-.5);
    \begin{scope}
      \clip \firstcircle;
      \fill[blue,opacity = 0.5] \thirdcircle;
    \end{scope}
\end{tikzpicture}
\caption{The two switch model shown as the special case of the non ideal case system model.}
\label{gen1}
\end{figure}
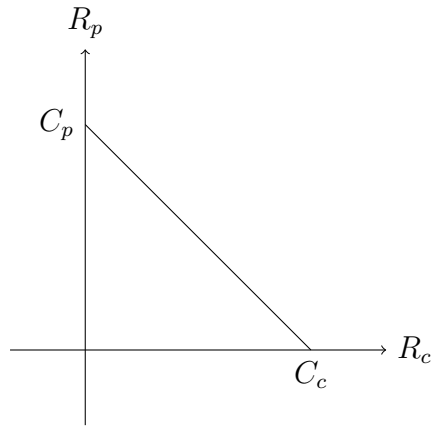
\begin{figure}[H]
 \centering
\begin{tikzpicture}
\tikzstyle{line} = [draw, -latex']
    \draw  [->] (-1,0) -- coordinate (x axis mid) (4,0) node[right]{$R_c$};
    \draw [->] (0,-1) -- coordinate (y axis mid) (0,4) node[above]{$R_p$};
    \draw [-](3,0)node[below]{$C_c$}--(0,3) node[left]{$C_p$};
\end{tikzpicture}
\caption{Rate region of  ideal CR.}
\label{idrate}
\end{figure}
\begin{figure}[H]
\centering
\includegraphics[scale=0.6]{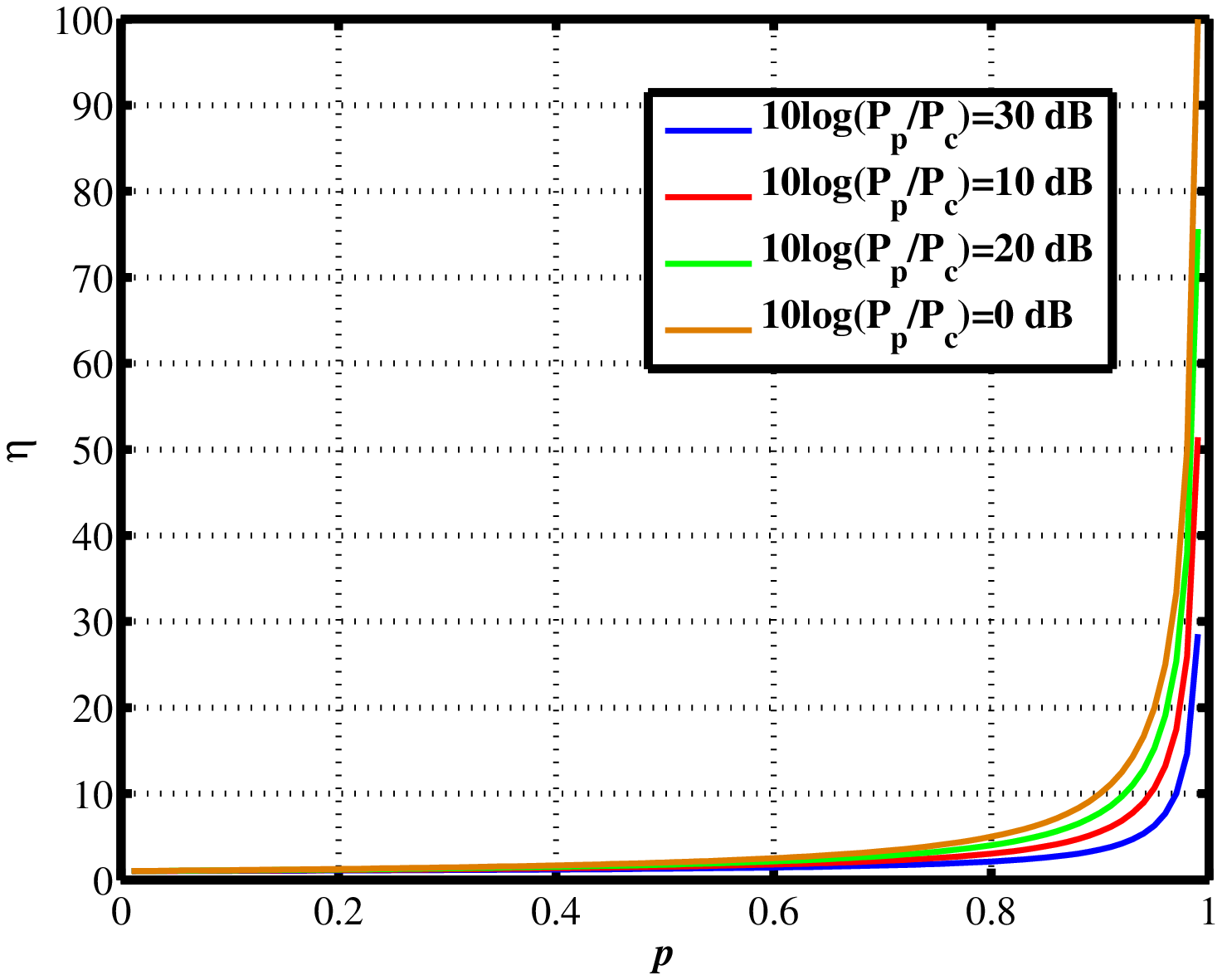}
\caption{Dependence of $\eta$ on $p$ for various relative power levels of PU
and CR.}
\label{exm}
\end{figure}
\begin{figure}[H]
\centering
\includegraphics[scale=0.6]{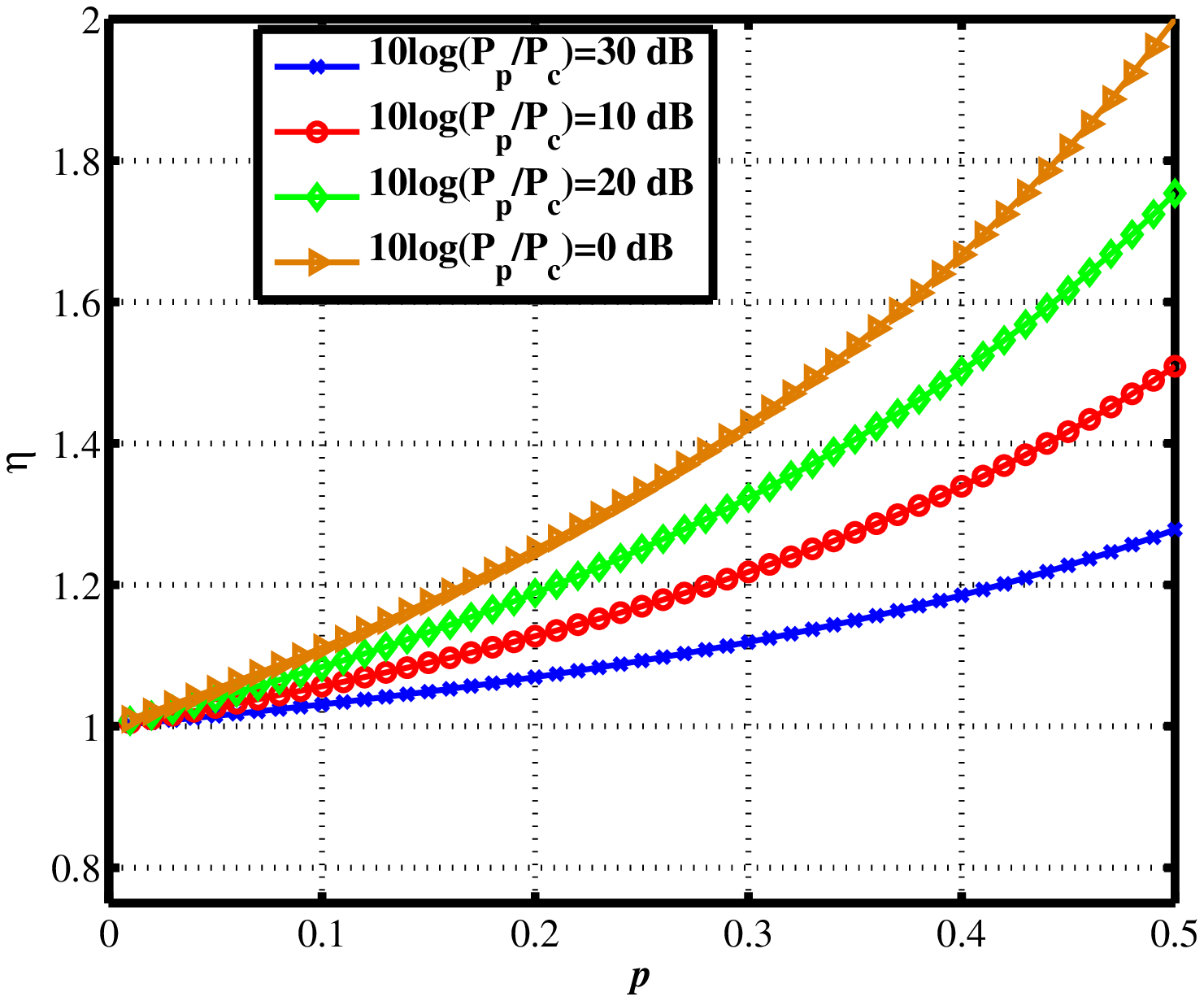}
\caption{Dependence of $\eta$ on small values $p$ for various relative
power levels of PU and CR.}
\label{exm1}
\end{figure}
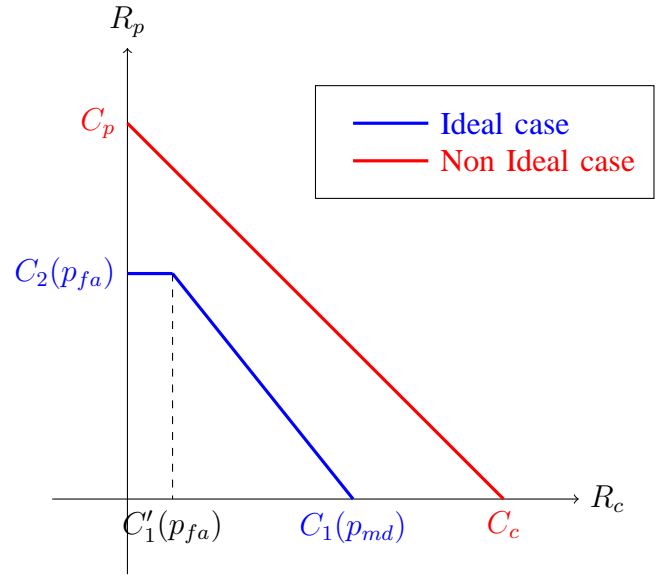
\begin{figure}[H]
\centering
\begin{tikzpicture}
\tikzstyle{line} = [draw, -latex']
\draw  [->] (-1,0) -- coordinate (x axis mid) (6,0) node[right]{$R_c$};
\draw [->] (0,-1) -- coordinate (y axis mid) (0,6) node[above]{$R_p$};
\draw [-, blue, very thick](3,0)node[below]{$C_1(p_{md})$}--(0.6,3)
node[left]{};
\draw [-] (2.5,4) -- (7,4) -- (7,5.5) -- (2.5,5.5) -- (2.5,4);
\draw [-,blue, very thick](0.6,3)node[below]{}--(0,3) node[left]{$C_2(p_{fa})$};
\draw [-,blue, very thick](3,5)node[below]{}--(4,5) node[right]{Ideal case};
\draw [-,red, very thick](3,4.5)node[below]{}--(4,4.5) node[right]{Non Ideal
case};
\draw [-, dashed](0.6,3) -- (0.6,0) node[below]{$C_1'(p_{fa})$};
\draw [-, red, very thick](5,0)node[below]{$C_c$}--(0,5) node[left]{$C_p$};
\end{tikzpicture}
\caption{Typical rate region of non ideal case.}
\label{nidrate}
\end{figure}
\begin{figure}[H]
 \centering
\includegraphics[scale=0.3]{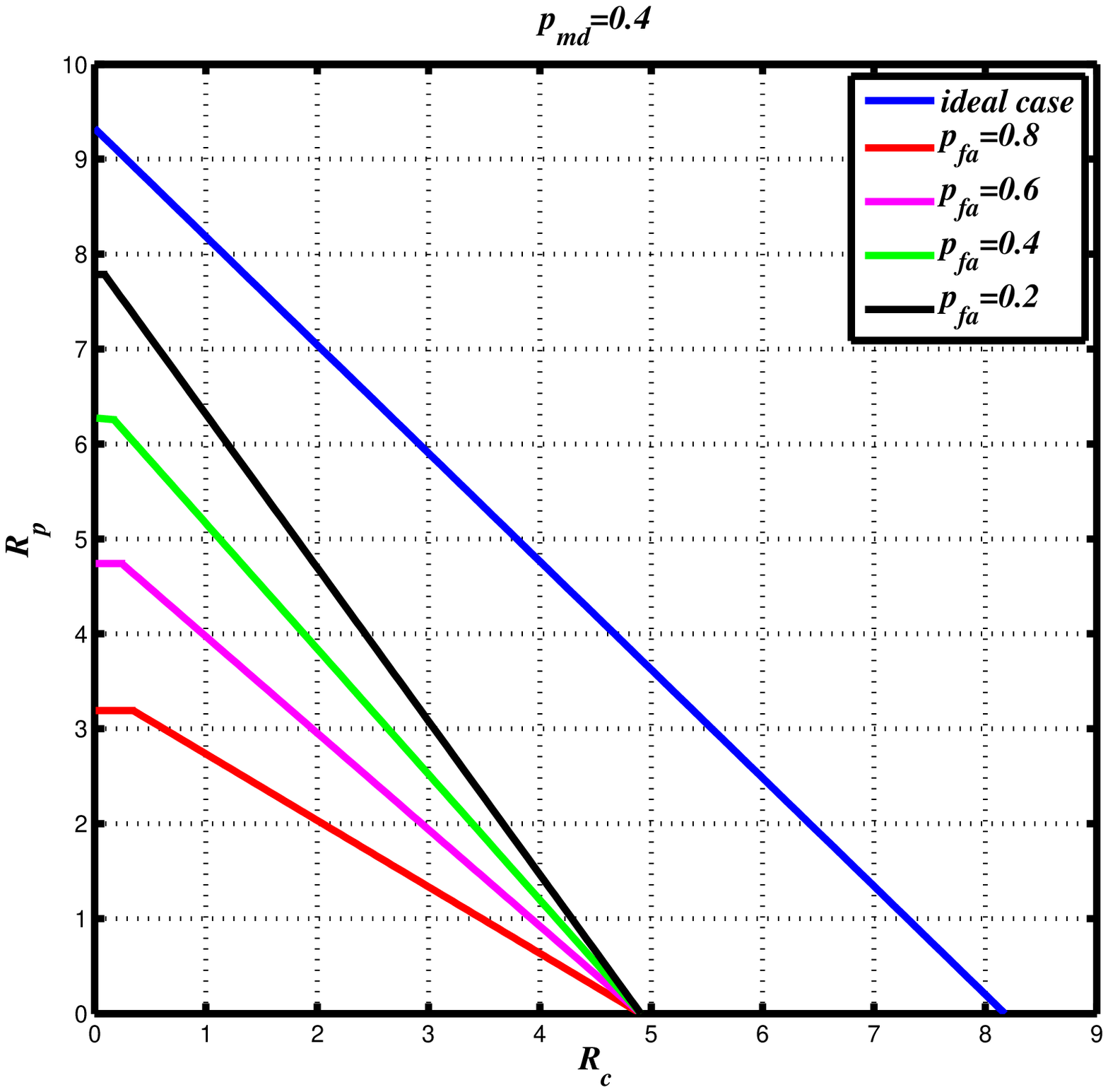}
\caption{Rate region of non ideal case compared with non ideal for fixed
$p_{md}$ but variable $p_{fa}$.}
\label{pfrate}
\end{figure}
\begin{figure}[H]
 \centering
\includegraphics[scale=0.3]{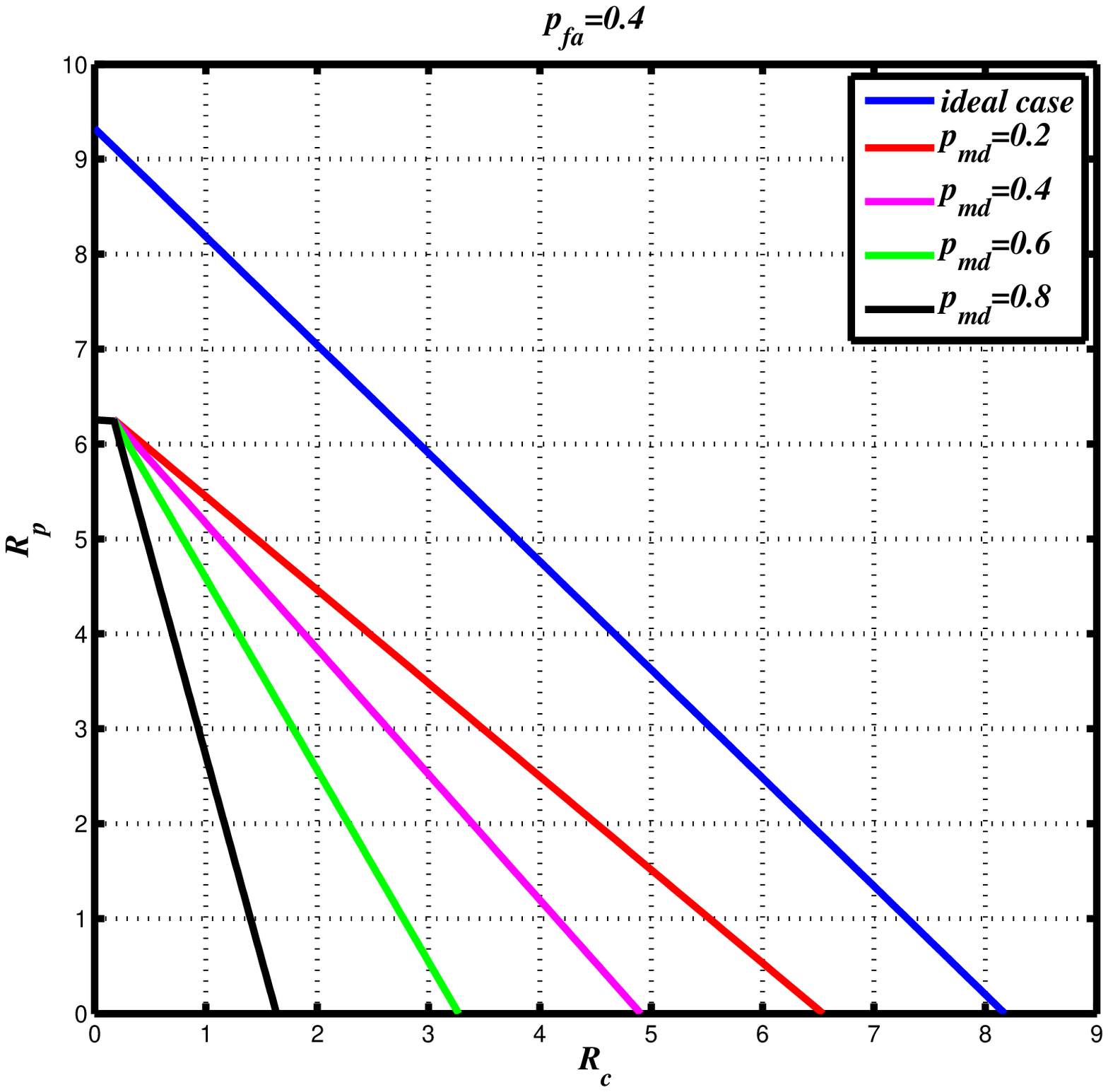}
\caption{Rate region of non ideal case compared with non ideal for fixed
$p_{fa}$ but variable $p_{md}$.}
\label{pmrate}
\end{figure}
\begin{figure}[H]
\centering
\includegraphics[scale=0.3]{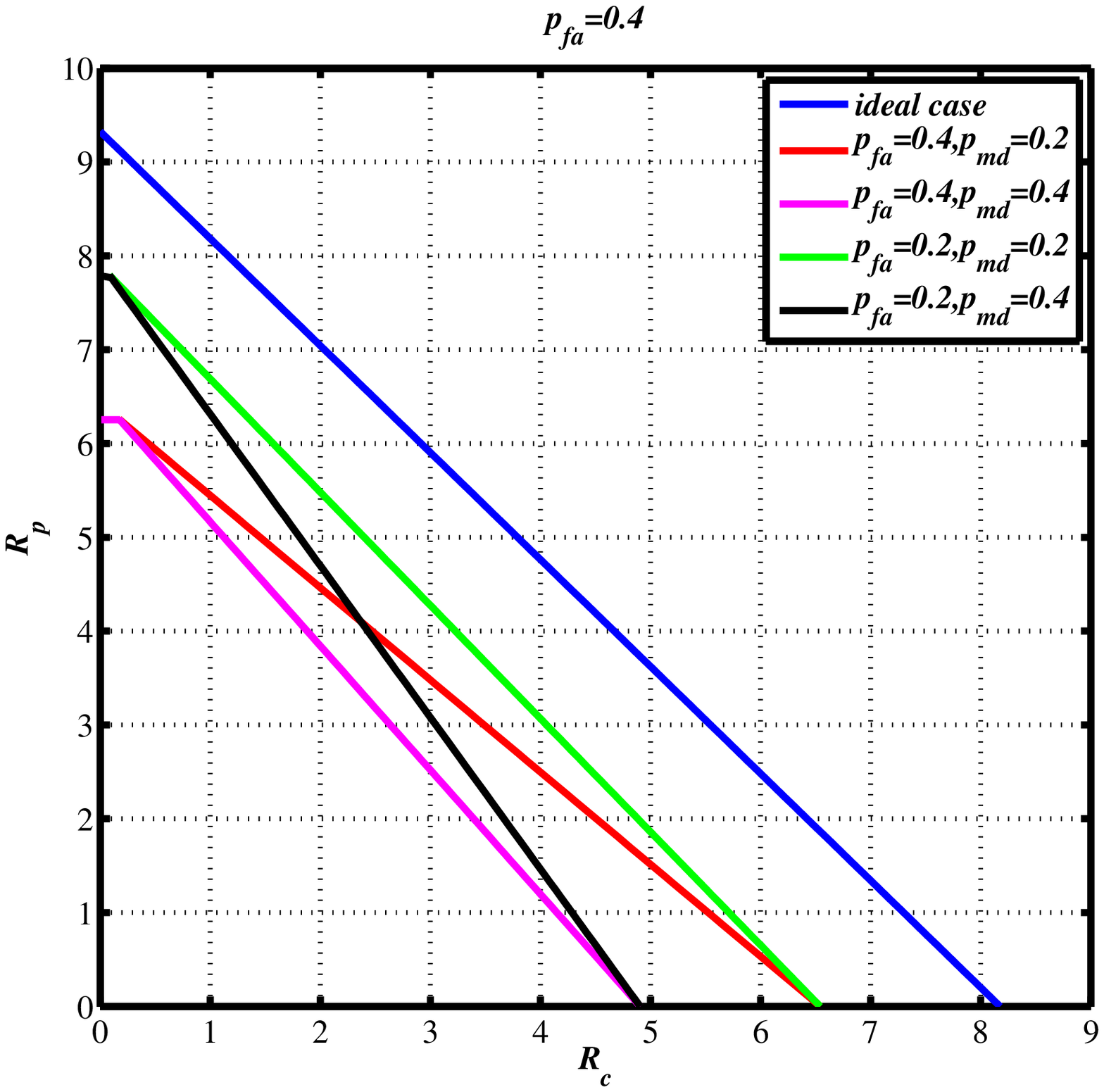}
\caption{Rate region of non ideal case compared with non ideal for two
$p_{fa}$ and $p_{md}$ values. When $p_{fa}$ and $p_{md}$ both decrease the
rate region becomes closer to the ideal case.}
\label{ratepmpf}
\end{figure}
\begin{figure}[H]
\centering
\begin{tikzpicture}[scale=1.5]
\tikzstyle{line} = [draw, -latex']
\draw [->] (0,0) -- coordinate (x axis mid) (5,0)
node[right]{$p_{fa}$};
\draw [->] (0,0) -- coordinate (y axis mid) (0,5) node[above]{$p_{md}$};
\fill[color=green!80!black,opacity=0.8](0,0)--(0,4)--(3.5,0)--cycle;
\draw [-](0,4) -- (3.5,0)node[right]{};
\draw [-](0,4) -- (2,4)node[right]{};
\draw [-](2,4) -- (2,0)node[right]{};
\draw [-, dashed](4,0) -- (4,0)node[below]{$(1,0)$};
\draw [-,line width=1mm] (0,0) -- (0,4) node[left]{$(0,1)$};
\draw [->, dashed](0,3) -- (5,5)node[above]{Strong admissibility line
$p_{fa}=0$};
\draw [->, dashed](1.55,2.3) -- (5,3.5)node[above]{Weak admissibility region};
\draw [->, dashed](2,1) -- (5,2)node[above]{Strong admissibility region with a
loss factor $\gamma_0$};
\fill[color=blue!80!black,opacity=0.2](0,0)--(0,4)--(2,4)--(2,0)node[below,
color=black,opacity=1]{ $p_{fa} = \dfrac {A_p(1-\gamma_0)}{A_p-B_p}$} --(0 , 0);
\end{tikzpicture}
\caption{Weak admissible region appears as shown above. It increases with
increase in value of $p$ and when PU to CR power ratio increases. Note here the
value of $p_{fa}$ which causes the change in admissible region. This is
explained later. Strong admissible region for all values of $p$ and CR and PU
powers transmit is the line $p_{fa} =0$. Strong admissible region with a loss
factor $\gamma_0$ is the shaded rectangle. As $\gamma_0$ increases the size of
the rectangle decreases and eventually becomes the line $p_{fa}=0$ for
$\gamma_0=1$. Note that strong admissible with a loss factor $\gamma_0$ is not
necessarily a sub region of the weak admissible region.}
\label{weak}
\end{figure}
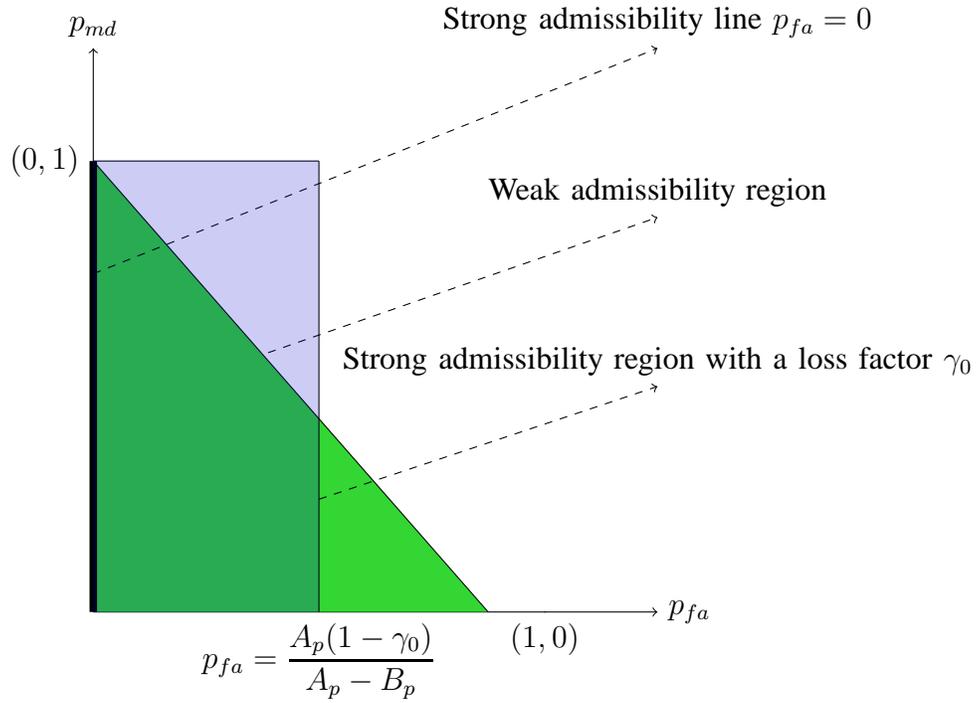
\begin{figure}[H]
\centering
\begin{tikzpicture}[scale=0.8]
\tikzstyle{line} = [draw, -latex']
\draw  [-] (0,0) -- coordinate (x axis mid) (5,0)
node[right]{$1-\gamma$};
\draw [-] (0,0) -- (0,0) node[below]{$0$};
\draw [-] (0,0) -- coordinate (y axis mid) (0,5) node[above]{$p_{fa}$};
\draw [-] (0,0) -- (2.5,4) node[above]{};
\draw [-](2.5,4) -- (4,4)node[right]{};
\draw [->,dashed](1.25,2) -- (4.5,2)node[right]{Slope $ =
\dfrac{A_p}{A_p-B_p}$};
\draw [-,dashed](2.5,4) --
(2.5,0)node[below]{$\left(1-\dfrac{B_p}{A_p}\right)$};
\draw [-, dashed](4,4) -- (0,4)node[left]{$1$};
\draw [-, dashed](4,4) -- (4,0)node[below]{$1$};
\end{tikzpicture}
\caption{$p_{fa}$ vs $(1-\gamma)$. As $(1-\gamma)$ increases, the
maximum value of $p_{fa}$ increases up until $\gamma = \dfrac{B_p}{A_p}$, where
$p_{fa}\leq 1$ becomes admissible.}
\label{loss}
\end{figure}
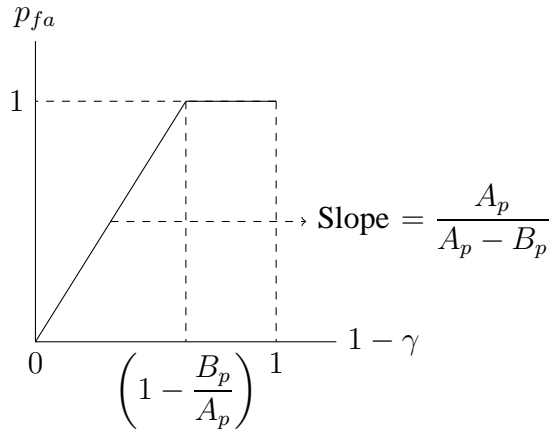
\begin{figure}[H]
 \centering
\includegraphics[scale=0.5]{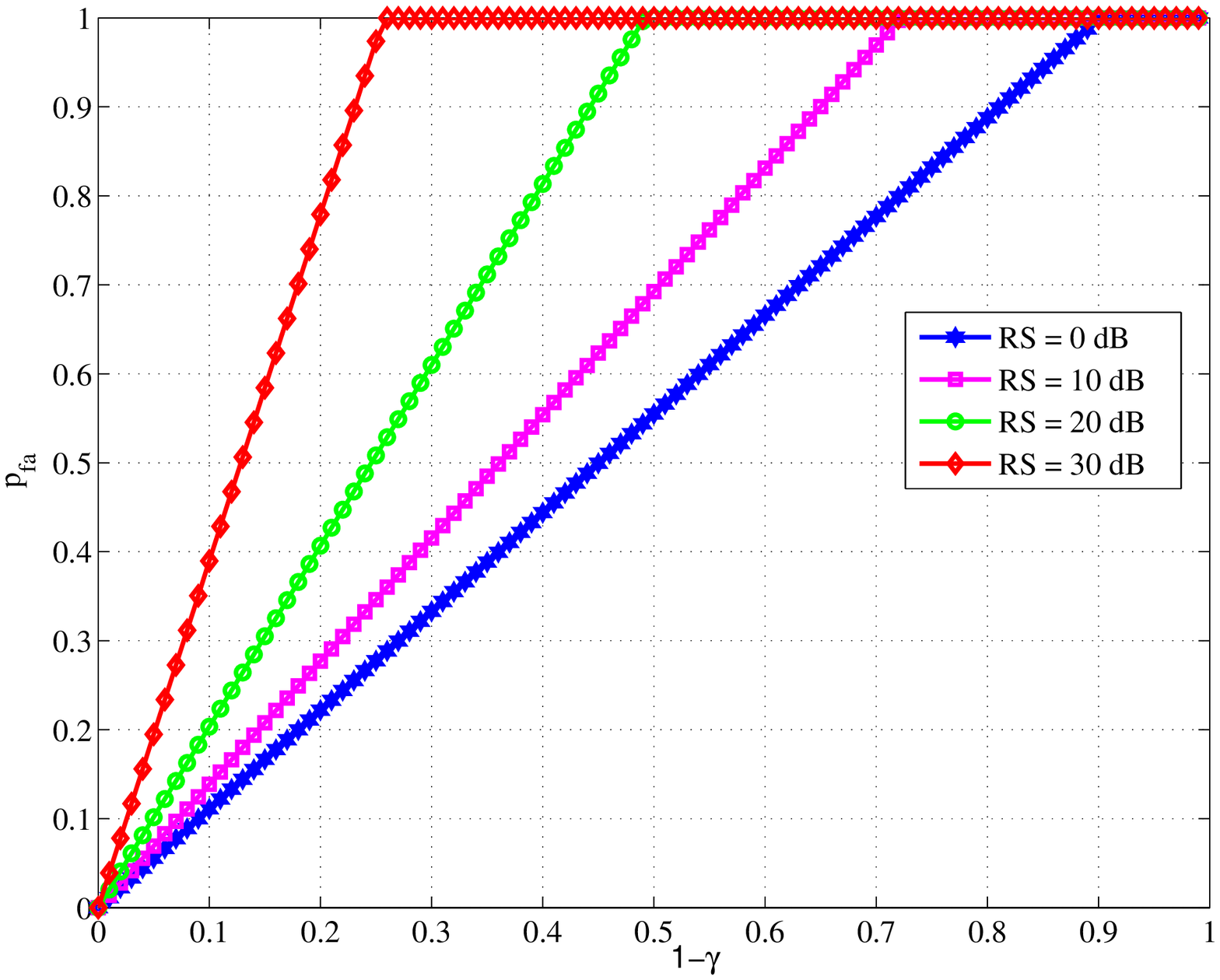}
\caption{$p_{fa}$ vs $(1-\gamma)$ for different values of relative power
levels of CR and PU; $RS$ and $p=0.4$.}
\label{pflos}
\end{figure}
\begin{figure}[H]
 \centering
\includegraphics[scale=0.5]{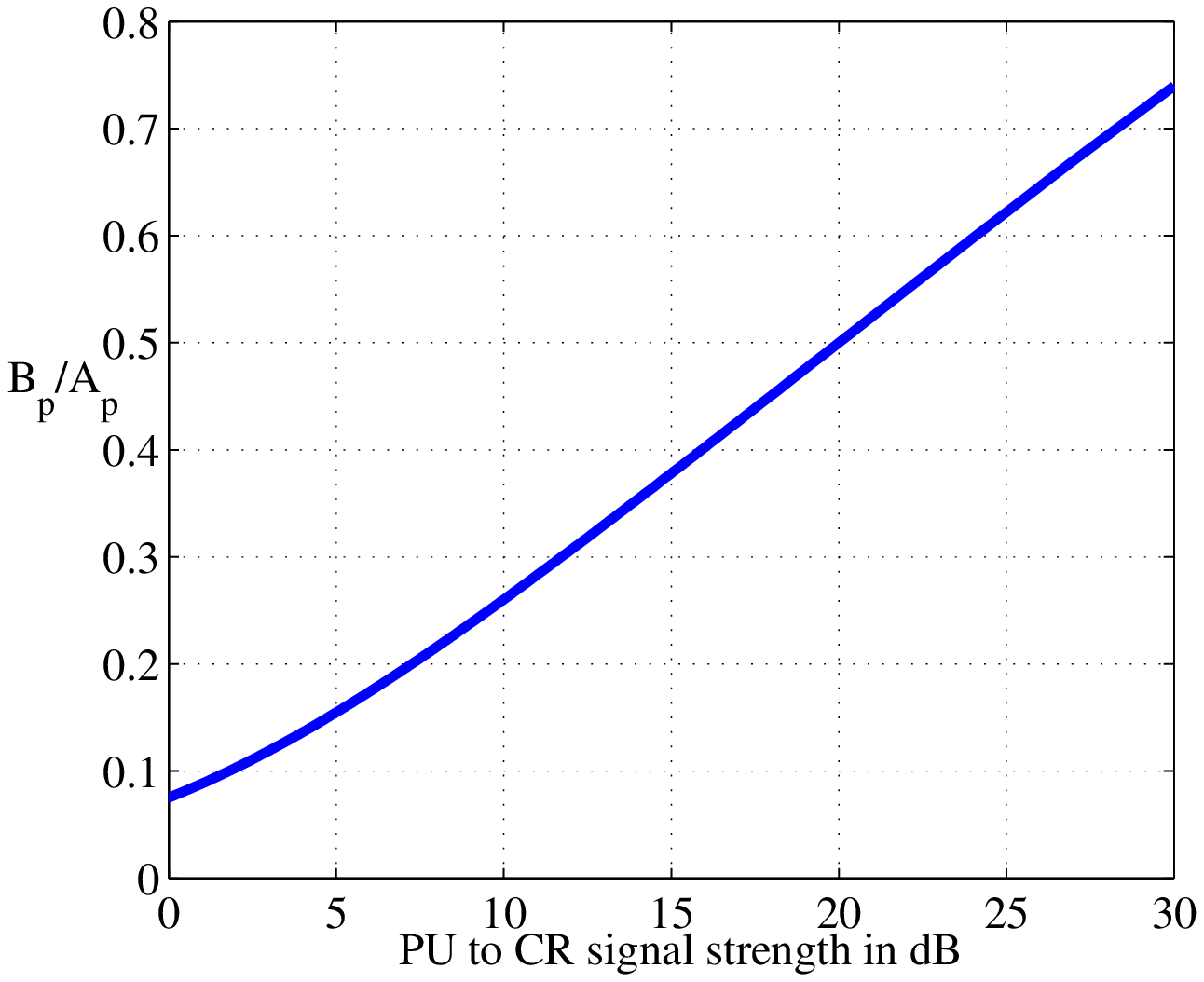}
\caption{Plot of full admissible point vs relative power levels of the CR and
PU; $RS$. for $p=0.4$}
\label{bptab}
\end{figure}
\begin{figure}[H]
 \centering
\includegraphics[scale=0.4]{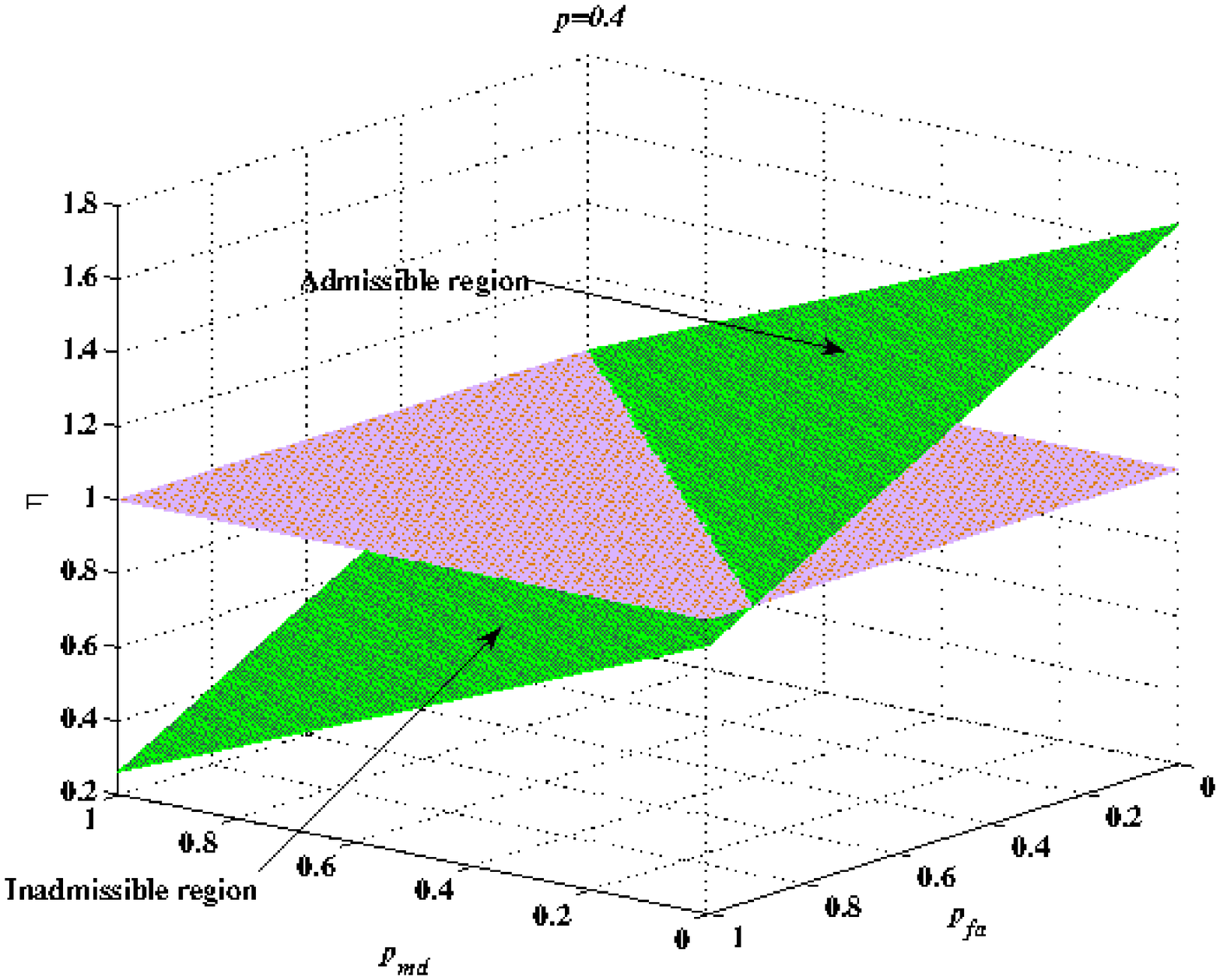}
\caption{Spectral efficiency factor variation vs $p_{fa}$ and
$p_{md}$ for $p=0.4$ and $RS=0$ dB showing  weakly admissible and inadmissible
regions.}
\label{exp4}
\end{figure}
\begin{figure}[H]
 \centering
\includegraphics[scale=0.4]{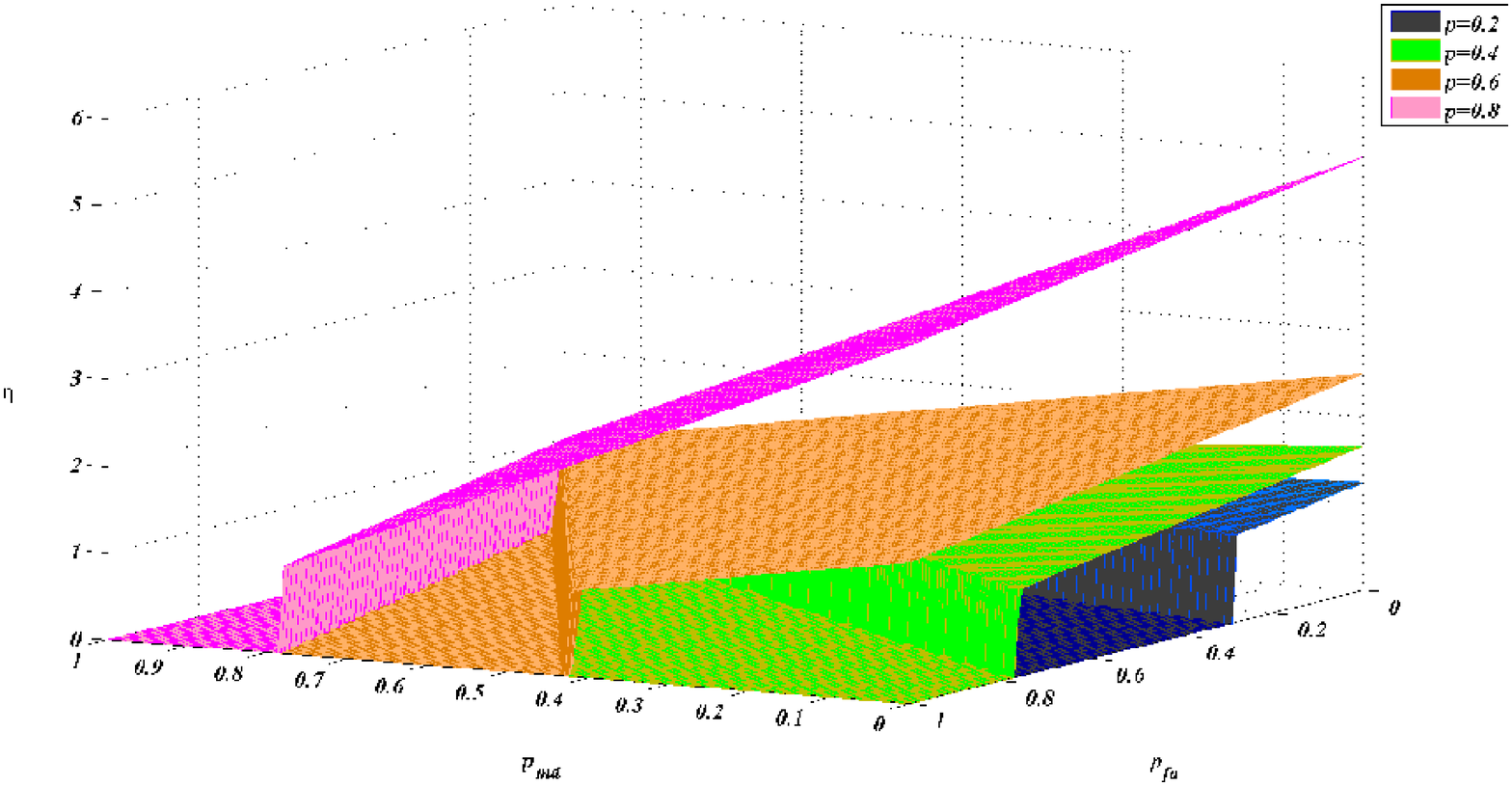}
\caption{Spectral efficiency factor variation vs $p_{fa}$ and
$p_{md}$ for various values of $p$ and $RS=0$ dB, highlighting the change in
admissible region with $p$. A cut has been introduced to separately depict the
spectral efficiency factor in the weakly admissible regions only.}
\label{spec}
\end{figure}
\begin{figure}[H]
\centering%
\includegraphics[scale=0.35]{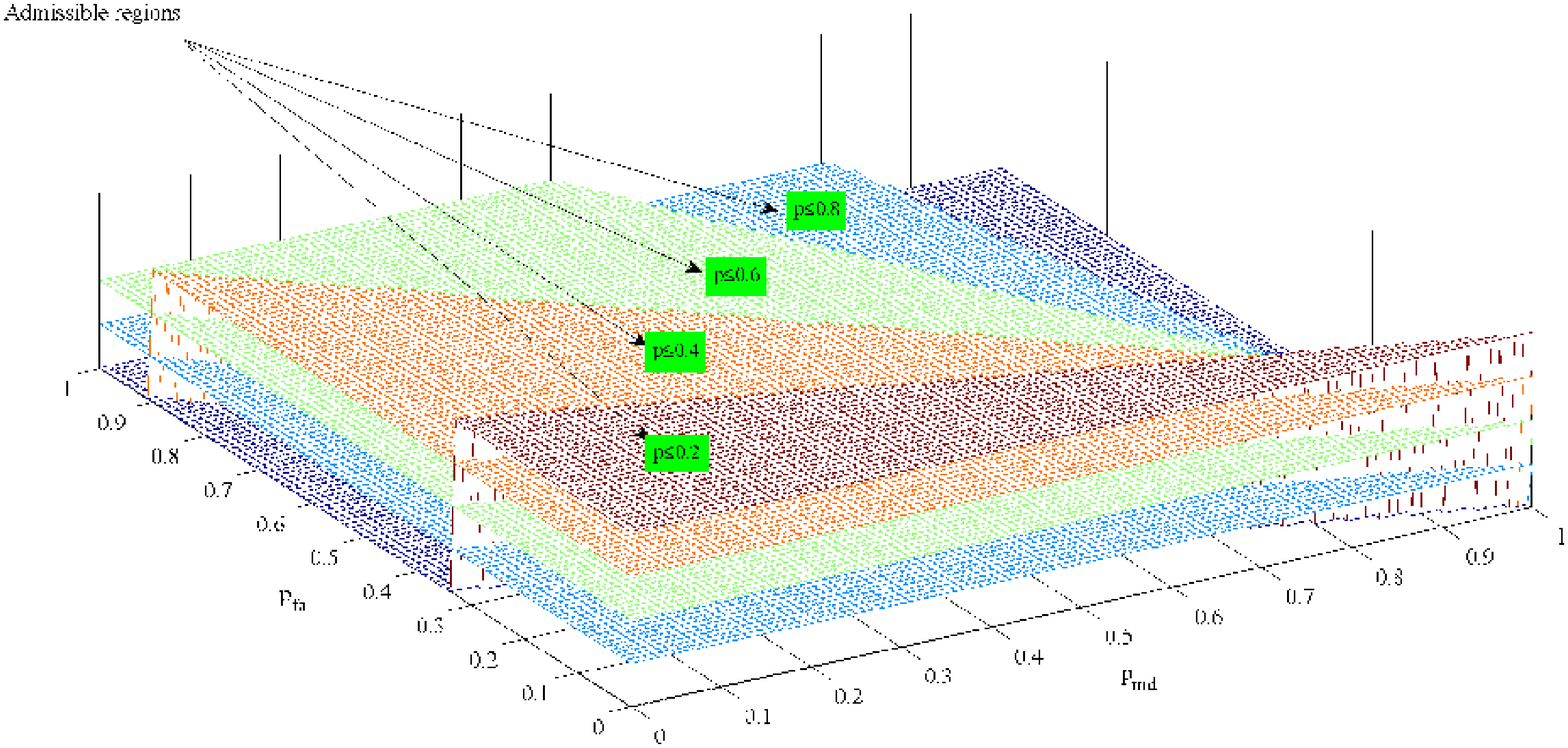}
\caption{$p_{fa}$ vs $p_{md}$ plot showing weakly admissible regions. Note that
the weakly admissible region for $p=p_0$ (say 0.6) is the sum of shaded weakly
admissible regions for all $p$ less than $p_0$. So the admissible region for
$p=0.6$ is the sum of shaded weakly admissible regions for $p=0.2$ and 0.4 plus
the region shaded for $p=0.6$.}
\label{thrd}
\end{figure}
\begin{figure}[H]
 \centering
\includegraphics[scale=0.32]{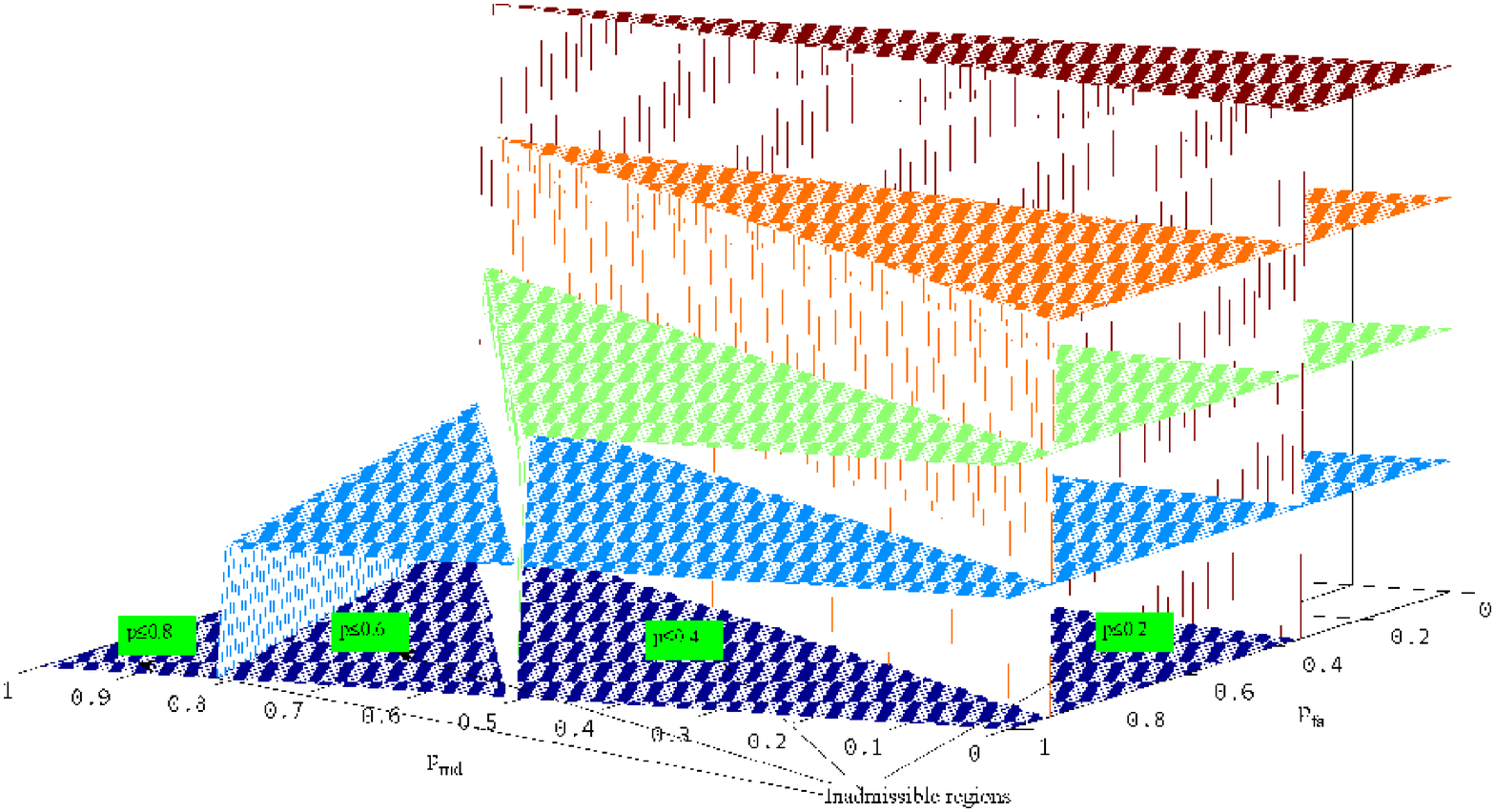}
\caption{$p_{fa}$ vs $p_{md}$ plot showing inadmissible regions.}
\label{pffm}
\end{figure}
\begin{figure}[H]
 \centering
\includegraphics[scale=0.35]{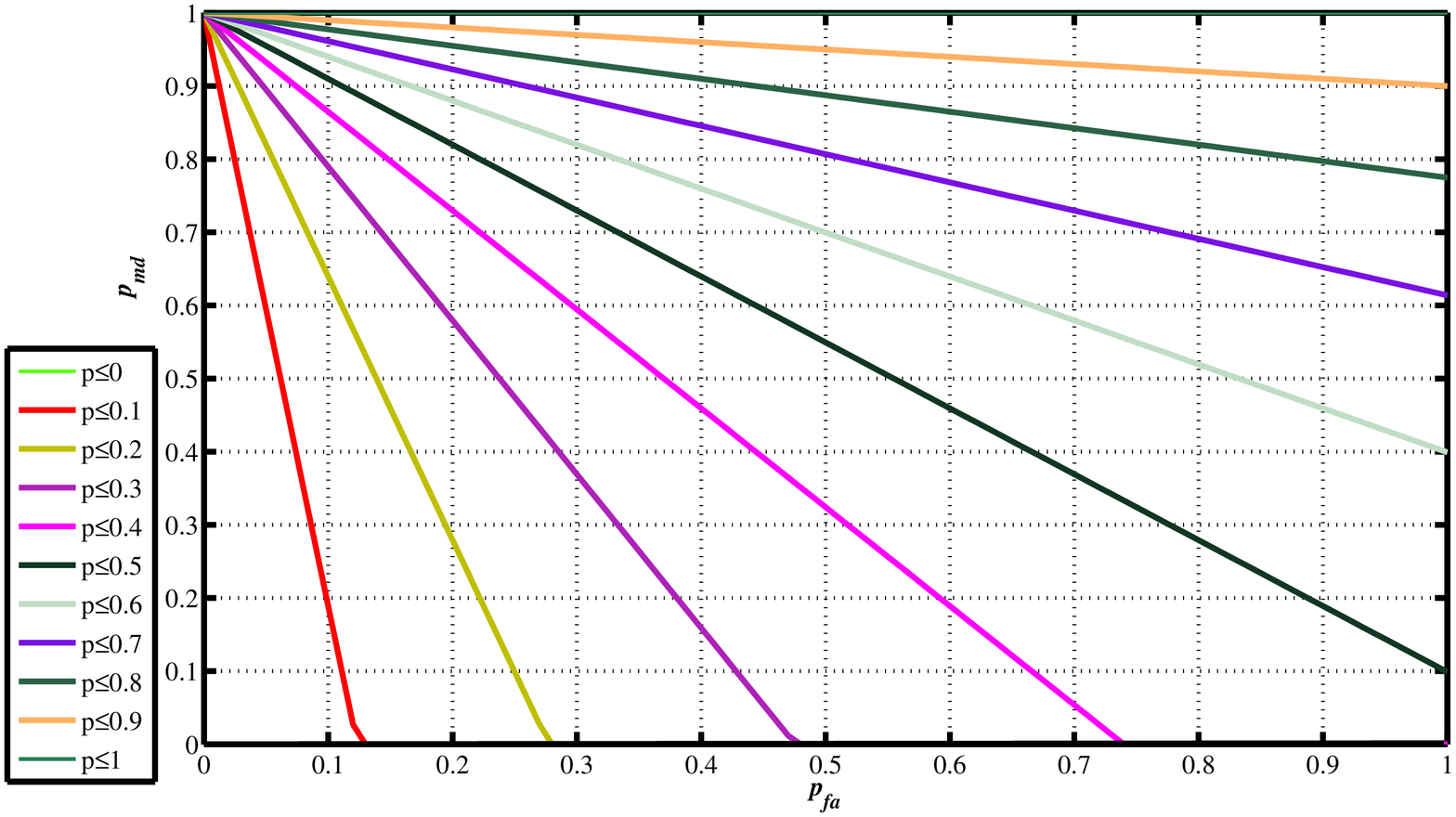}
\caption{$p_{fa}$ vs $p_{md}$ plot for a fixed $RS = 0 dB$ for various values of
$p$.}
\label{pfpm}
\end{figure}
\begin{figure}[H]
 \centering
\includegraphics[scale=0.4]{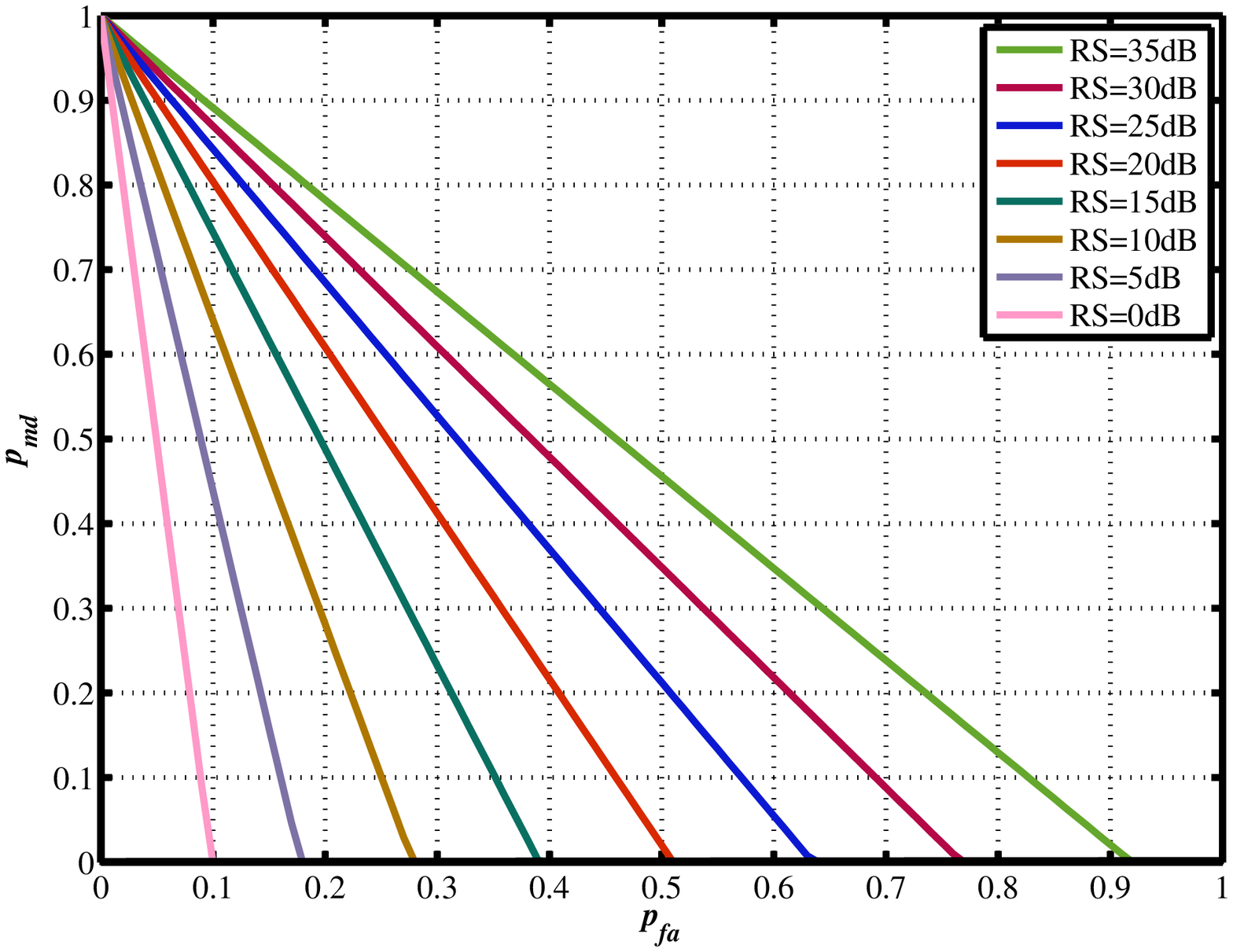}
\caption{$p_{fa}$ vs $p_{md}$ plot for a fixed $p=0.5$ for various values of
relative power levels of PU and CR, $RS$.}
\label{snrvar}
\end{figure}
\begin{figure}[H]
\centering%
\includegraphics[scale=0.37]{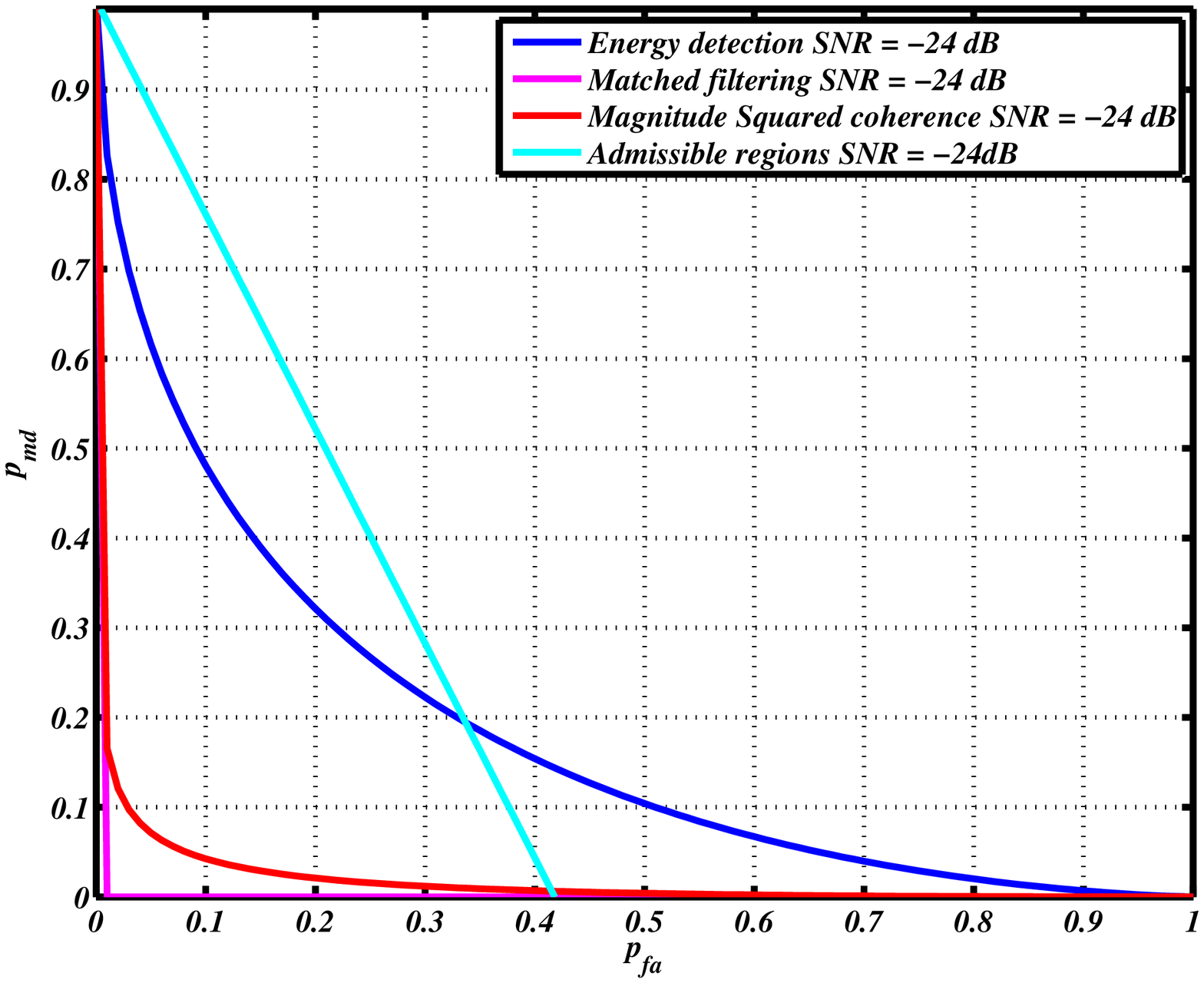}
\caption{Admissible regions for $p=0.2$ and $RS = 20 dB$ for the energy
detector, magnitude squared coherence and matched filter. The average received
SNR is $-24dB$}
\label{snrdet24}
\end{figure}
\begin{figure}[H]
\centering%
\includegraphics[scale=0.37]{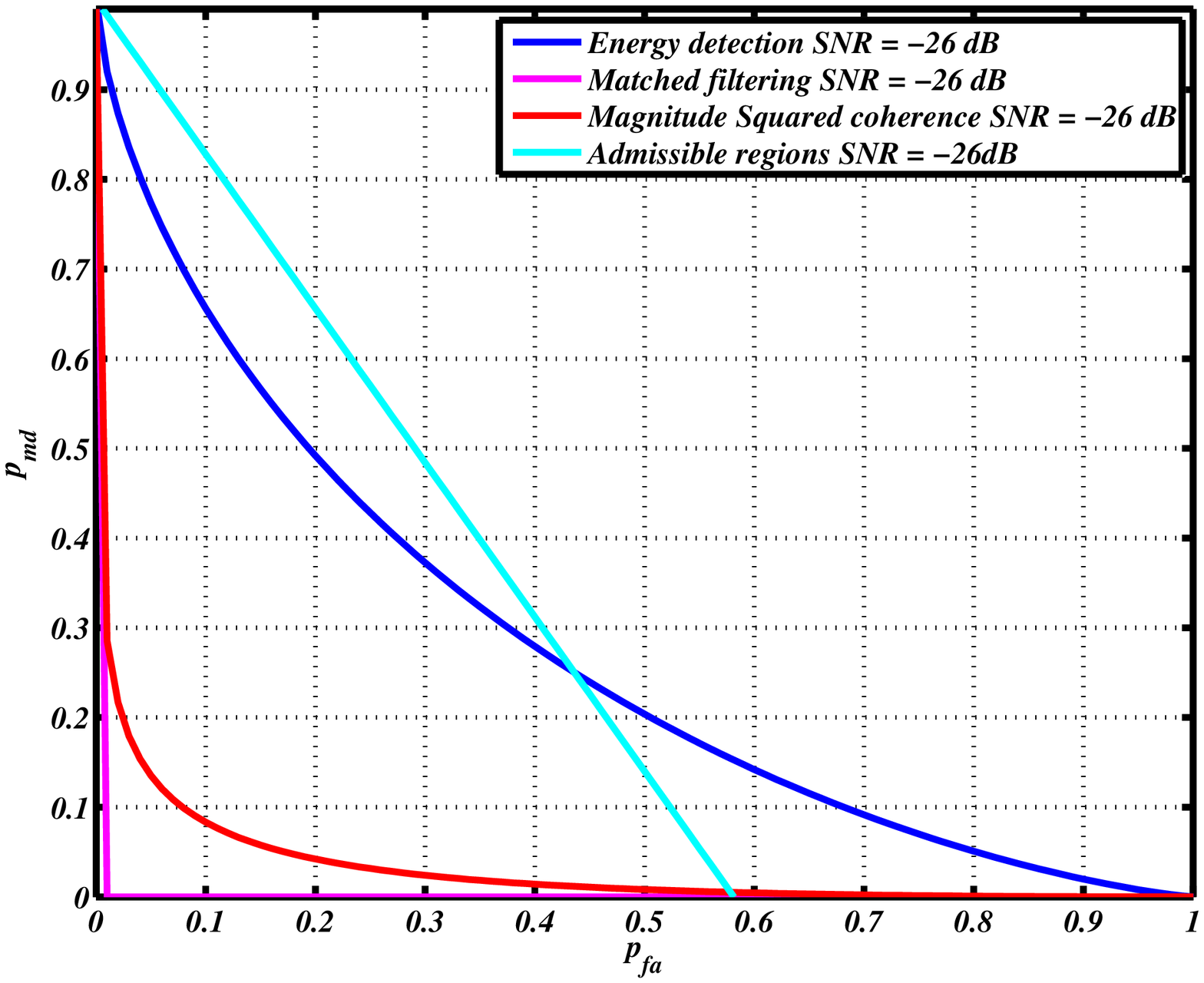}
\caption{Admissible regions for $p=0.2$ and $RS = 20 dB$ for the energy
detector, magnitude squared coherence and matched filter. The average received
SNR is $-26dB$}
\label{snrdet26}
\end{figure}
\bibliographystyle{ieeetr}	
\bibliography{myrefs}

\end{document}